\documentclass[a4paper]{article}

\pdfoutput=1

\usepackage[T1]{fontenc}
\usepackage[utf8]{inputenc}
\usepackage[english]{babel}

\usepackage[margin=29mm]{geometry}

\usepackage{amsmath, amssymb, amsthm}
\usepackage{mathtools}
\usepackage{bm}
\usepackage{physics}
\usepackage{siunitx}

\usepackage[affil-it]{authblk}
\usepackage[font=footnotesize]{caption}
\usepackage{newfloat}
\usepackage{subcaption}
\usepackage{booktabs}
\usepackage{graphicx}
\usepackage{pgfplots}

\usepackage{csquotes}
\usepackage[style=phys,biblabel=brackets,chaptertitle=false,eprint,url]{biblatex}

\usepackage[colorlinks]{hyperref}
\usepackage[capitalise]{cleveref}

\sisetup{output-exponent-marker=\ensuremath{\mathrm{e}}}

\addbibresource{refs.bib}

\pgfplotsset{compat=1.18}
\usepgfplotslibrary{external,fillbetween,groupplots}
\tikzexternalize
\def\egap(#1,#2){sqrt(1-4*#2*(1-#2)*(#1-1)/#1)}
\def\schedprop(#1,#2){(1+(2*#2-1)/sqrt(1+4*(#1-1)*#2*(1-#2)))/2}
\def\schedorig(#1,#2){(1+tan((2*#2-1)*atan(sqrt(#1-1)))/sqrt(#1-1))/2}
\def\qofs(#1,#2){(1+(1-2*(1-#2)*(#1-1)/#1)/sqrt(1-4*#2*(1-#2)*(#1-1)/#1))/2}
\def\tauprop(#1,#2){#2-sqrt(#2*(1-#2)/(#1-1))}
\def\tauorig(#1,#2){1-acos(sqrt(#2))/acos(1/sqrt(#1))}
\def\qoftau(#1,#2){(1+2*(#2-1)*#1+sqrt(1+4*(#2-1)*#1*(1-#1)))/(2*#2)}  
\def\pofq(#1,#2){#2+#1^2*(1-2*#2)-2*sqrt(#1^2*(1-#1^2)*#2*(1-#2))}
\def\pofqsqrterr(#1,#2){(1-#1)^2*#2+2*#1*#2*(1-#1*#2-sqrt((1-#1^2*#2)*(1-#2)))}
\def\pofqlinerr(#1,#2){(1-#1)*#2+#1*#2*(1+#1*#2-2*#1*#2^2-2*sqrt((1-#1^2*#2^2)*#2*(1-#2)))}
\def\poftauconst(#1,#2,#3){((1+2*(#3-1)*#1+sqrt(1+4*(#3-1)*#1*(1-#1)))/(2*#3))+#2^2*(1-2*((1+2*(#3-1)*#1+sqrt(1+4*(#3-1)*#1*(1-#1)))/(2*#3)))-2*sqrt(#2^2*(1-#2^2)*((1+2*(#3-1)*#1+sqrt(1+4*(#3-1)*#1*(1-#1)))/(2*#3))*(1-((1+2*(#3-1)*#1+sqrt(1+4*(#3-1)*#1*(1-#1)))/(2*#3))))}  
\def\poftausqrt(#1,#2){((1+2*(#2-1)*#1+sqrt(1+4*(#2-1)*#1*(1-#1)))/(2*#2))+#1*(1-2*((1+2*(#2-1)*#1+sqrt(1+4*(#2-1)*#1*(1-#1)))/(2*#2)))-2*sqrt(#1*(1-#1)*((1+2*(#2-1)*#1+sqrt(1+4*(#2-1)*#1*(1-#1)))/(2*#2))*(1-((1+2*(#2-1)*#1+sqrt(1+4*(#2-1)*#1*(1-#1)))/(2*#2))))}  
\def\poftausinsqrt(#1,#2,#3){((1+2*(#3-1)*#1+sqrt(1+4*(#3-1)*#1*(1-#1)))/(2*#3))+#2^2*sin(sqrt(#1)/#2)^2*(1-2*((1+2*(#3-1)*#1+sqrt(1+4*(#3-1)*#1*(1-#1)))/(2*#3)))-2*sqrt(#2^2*sin(sqrt(#1)/#2)^2*(1-#2^2*sin(sqrt(#1)/#2)^2)*((1+2*(#3-1)*#1+sqrt(1+4*(#3-1)*#1*(1-#1)))/(2*#3))*(1-((1+2*(#3-1)*#1+sqrt(1+4*(#3-1)*#1*(1-#1)))/(2*#3))))}  
\def\poftausqrtscaled(#1,#2,#3){((1+2*(#3-1)*#1+sqrt(1+4*(#3-1)*#1*(1-#1)))/(2*#3))+#2^2*#1*(1-2*((1+2*(#3-1)*#1+sqrt(1+4*(#3-1)*#1*(1-#1)))/(2*#3)))-2*#2*sqrt(#1*(1-#2^2*#1)*((1+2*(#3-1)*#1+sqrt(1+4*(#3-1)*#1*(1-#1)))/(2*#3))*(1-((1+2*(#3-1)*#1+sqrt(1+4*(#3-1)*#1*(1-#1)))/(2*#3))))}  

\newtheorem{theorem}{Theorem}

\theoremstyle{definition}

\theoremstyle{remark}
\newtheorem*{remark}{Remark}

\newcommand{\bigO}{O}
\newcommand{\euler}{e}

\DeclareFloatingEnvironment{protocol}
\Crefname{protocol}{\protocolname}{\protocolname s}

\title{Adiabatic quantum unstructured search in parallel}

\author[ \hspace{-1ex}]{Sean A. Adamson\thanks{\href{mailto:sean.adamson@ed.ac.uk}{\texttt{sean.adamson@ed.ac.uk}}}}
\author[ \hspace{-1ex}]{Petros Wallden\thanks{\href{mailto:petros.wallden@ed.ac.uk}{\texttt{petros.wallden@ed.ac.uk}}}}
\affil[ \hspace{-1ex}]{School of Informatics, University of Edinburgh, \protect\\ 10 Crichton Street, Edinburgh EH8 9AB, United Kingdom}

\date{}

\begin{document}

\maketitle

\begin{abstract}
We present an optimized adiabatic quantum schedule for unstructured search building on the original approach of Roland and Cerf [\href{https://doi.org/10.1103/PhysRevA.65.042308}{Phys. Rev. A \textbf{65}, 042308 (2002)}]. Our schedule adiabatically varies the Hamiltonian even more rapidly at the endpoints of its evolution, preserving Grover's well-known quadratic quantum speedup. In the errorless adiabatic limit, the probability of successfully obtaining the marked state from a measurement increases directly proportional to time, suggesting efficient parallelization. Numerical simulations of an appropriate reduced two-dimensional Schr{\"o}dinger system confirm adiabaticity while demonstrating superior performance in terms of probability compared to existing adiabatic algorithms and Grover's algorithm, benefiting applications with possible premature termination. We introduce a protocol that ensures a marked-state probability at least $p$ in time of order $\sqrt{N}(1+p/\varepsilon)$, and analyze its implications for realistic bounded-resource scenarios. Our findings suggest that quantum advantage may still be achievable under constrained coherence times (where other algorithms fail), provided the hardware allows for them to be sufficiently long.
\end{abstract}

\section{Introduction}

The unstructured search problem is a fundamental computational task.
Given a database of $N$ elements with one marked item, to find this item classically $\bigO(N)$ queries to the database are required in the worst case.
Quantum algorithms offer a significant (although polynomial) advantage: a seminal algorithm due to \textcite{grover1996fast} achieves a reduction of time complexity to $\bigO(\sqrt{N})$ by discrete, repeated application of a unitary operator.
This quantum speedup is asymptotically tight in the circuit model \cite{bennett1997strengths}.
Adiabatic quantum computing provides an alternative continuous-time framework, where computation is realized through the gradual evolution of a quantum system under a time-dependent Hamiltonian \cite{farhi2000quantum}.
If the Hamiltonian is varied slowly enough, the system will remain close to its instantaneous ground state due to the adiabatic theorem \cite{born1928beweis}.
An adiabatic algorithm works by evolving the system from an initial Hamiltonian with a simple known ground state to a final Hamiltonian whose ground state encodes the solution to a problem, and then performing a measurement of this state.
The time complexity of such an algorithm is then determined by the amount of time spent performing this evolution adiabatically (assuming initial state preparation and final measurement times are negligible).
A fixed-rate linear interpolation between the simplest choices of initial and final Hamiltonians does not achieve any quantum advantage \cite{farhi2000quantum}.
However, the quadratic speedup of Grover's algorithm has been reproduced in the adiabatic setting by \textcite{roland2002quantum}.
This was achieved by varying the local speed at which the linear interpolation path is parametrized with respect to time, while still varying the Hamiltonian gradually enough at all times to maintain adiabaticity.

\subsection{Our contributions}

In this work, we introduce an optimized time schedule for adiabatic unstructured search, refining the well-known original interpolating schedule of \textcite{roland2002quantum}.
This schedule evolves the Hamiltonian even faster near the endpoints of the evolution, while slowing down in the critical middle region where the spectral energy gap is small.
The usual adiabatic condition is satisfied under our schedule (as it is this condition from which the schedule is derived).
While for general time-dependent Hamiltonians it is known that this condition does not necessarily imply adiabaticity in some situations \cite{marzlin2004inconsistency,tong2005quantitative,du2008experimental,wu2008adiabatic}, we later give numerical evidence that the adiabatic theorem does indeed hold in our case as expected.
We should stress here that despite being more aggressively scheduled, the running time for one full evolution of our adiabatic evolution is still $\bigO(\sqrt{N})$, matching Grover's algorithm.
As in circuit-based quantum computing, this is known to be optimal in the adiabatic model \cite{farhi1998analog,roland2002quantum}.

Our adiabatic schedule exhibits the interesting property that in the errorless ideal adiabatic limit (where we consider the system to always be exactly in its instantaneous ground state), the probability of measuring the system to be in the marked state increases in direct proportion with time, even at arbitrarily small times (see \cref{fig:q_tau}).
If such errorless evolution were possible in finite total time, it would immediately imply the possibility of efficient parallelization, in which $\sqrt{N}$ quantum processors could find the marked state in constant time (a known impossibility in the circuit model \cite{zalka1999grover}).

We relate the marked probability to the error (deviation from the instantaneous ground state due to finite evolution time) and analyze for which errors as functions of time parallelization is possible.
There exist nonzero error functions that would still admit parallelization.
We then perform numerical simulations to find the exact probabilities and errors for the system.
Perhaps unsurprisingly, these exact errors exceed the limits required for perfect parallelization.
However, we conjecture based on our numerical results a closed-form bound on the error (that of \cref{sec:sinsqrt_err}) that only fails to admit parallelization due to its behavior at times arbitrarily close to the evolution start.
On the other hand, we observe that the exact probability using our schedule outperforms the original schedule introduced by \textcite{roland2002quantum} for all times until approximately halfway through a whole evolution.
Furthermore, the probability outperforms Grover's algorithm after a constant time has passed, regardless of $N$.
This could find use in applications of unstructured search that operate with a chance of their search progress being canceled earlier than half way, and are otherwise likely to run until completion (such as certain cryptographic protocols).

Finally, we introduce \cref{prot:prob_p_measurement} that outputs the marked item with probability at least $p$ after time of order $\sqrt{N} + \bigO(\frac{1}{\varepsilon} \sqrt{N} p)$.
We analyze the overall running time in the more realistic scenario with bounded resources: a maximum available number of independent quantum processors each of which can run for a maximum time (for example, due to constrained coherence time of the hardware).
No quantum advantage is possible using Grover's algorithm or the original adiabatic schedule with constant bounded coherence time, however, our results give us reason to suspect that this is indeed possible using our schedule so long as coherence lasting longer than some constant time is achievable by the hardware.

\begin{protocol}[htb]
    \caption{
        Obtaining the marked state with success probability $p$ in quantum unstructured search within running time $\bigO(\sqrt{N} (1 + p / \varepsilon))$ for $p > 1/N$ and $\bigO(1)$ otherwise.
    }
    \label{prot:prob_p_measurement}
    \rule{\linewidth}{0.08em}

    Initial parameters $N = 2^{n} \geq 2$ where $n \in \mathbb{N}^{*}$, $0 < \varepsilon \ll 1$, and $0 \leq p \leq 1 - \varepsilon$.
    Problem Hamiltonian $H_{1} = I - \ketbra{m}$ whose ground state is the \emph{marked} $n$-qubit state $\ket{m}$ where $m \in \{0, \dots, N-1\}$.
    \begin{enumerate}
        \item Set the \emph{duration} parameter $T = \frac{2 \sqrt{N-1}}{\varepsilon}$ and the \emph{cutoff time} parameter
        \begin{equation}
            t_{\text{f}} = T(\varepsilon + p) = 2 \sqrt{N-1} \mathopen{}\left( 1 + \frac{p}{\varepsilon} \right) \mathclose{} .
        \end{equation}
        if $p > \frac{1}{N}$ and $t_{\text{f}} = 0$ otherwise.

        \item Prepare an initial state $\ket{\psi(0)} = \ket{+}^{\otimes n}$ (requiring a constant time).

        \item Evolve the system under Hamiltonian $[1-s] H_{0} + s H_{1}$ where $H_{0} = \ketbra{+}^{\otimes n}$ over time $t \in [0, t_{\text{f}}]$ according to the schedule
        \begin{equation}
            s = \frac{1}{2} \mathopen{}\left[ 1 + \frac{2 \frac{t}{T} - 1}{\sqrt{1 + 4 \mathopen{}\bigl( N-1 \bigr)\mathclose{} \frac{t}{T} \mathopen{}\bigl( 1 - \frac{t}{T} \bigr)\mathclose{}}} \right]\mathclose{} .
        \end{equation}

        \item Measure the resulting state $\ket{\psi(t_{\text{f}})}$ in the computational basis (requiring a constant time).
        The outcome will be $\ket{m}$ with probability at least $p$.
    \end{enumerate}

    \vspace*{-\baselineskip}
    \rule{\linewidth}{0.08em}
\end{protocol}

\subsection{Overview of techniques}

A detailed consideration of the transition matrix element for excitations between the ground and first-excited states allows us to even more aggressively vary the Hamiltonian at the beginning and end of its evolution, while still satisfying the adiabatic condition.
In order to give further evidence that evolution under our schedule is adiabatic (that the system remains close to the instantaneous ground state in the sense of \cref{thm:adiabatic_theorem}), as well as study the exact probability of successfully measuring the marked state at intermediate times, we turn to numerical methods to solve the Schr{\"o}dinger equation.
As we are most interested in the asymptotic behavior of the algorithms as $N = 2^{n}$ gets very large (where $n$ is the number of qubits encoding the number $N$), it is desirable for simulations to be computationally tractable for $n$ as large as possible.
We are mainly interested in simulating two quantities (as functions of time): the error and the probability.
To solve for only these, we are able to reduce the $2^{n}$-dimensional system of differential equations to a system of dimension just $2$ (details are contained in \cref{sec:numerical_schrodinger}), which can then be used to output both quantities simultaneously after some minor postprocessing.
Standard numerical software packages (described in \cref{sec:exact_adiabaticity}) can stably solve these equations, which we have tested up to $n=40$ (used to produce \cref{fig:crossing_time}).

\subsection{Related works}

The well-known statement of the adiabatic theorem that we make use of in this work has found criticism after certain pathological counterexamples and inconsistencies were discovered \cite{marzlin2004inconsistency,tong2005quantitative,du2008experimental,wu2008adiabatic}.
The first mathematically rigorous version is due to \textcite{kato1950adiabatic}.
Since then many more have been presented, each with their own assumptions and specific purposes \cite{jansen2007bounds,elgart2012note,ge2016rapid}.
Some algorithms designed for the noisy intermediate-scale quantum (NISQ) era in which we find ourselves have been inspired by adiabatic quantum computing \cite{garcia2018addressing,harwood2022improving,kolotouros2024simulating}.
Discretized versions of enhanced scheduling, such as introduced in the present work, may prove useful in potentially achieving quantum advantages in NISQ hardware implementing these algorithms.
More details on adiabatic quantum computing can be found in the review by \textcite{albash2018adiabatic}.

\subsection{Organization of the paper}

In \cref{sec:prelims}, we explain some notation, lay out the Schr{\"o}dinger equation for time-dependent Hamiltonians (making clear the different \emph{time} parameters involved), summarize the adiabatic theorem, and discuss properties of Hamiltonian for adiabatic unstructured search that we shall use throughout.
\Cref{sec:optimal_schedules} introduces our schedule, deriving it from the adiabatic condition.
\Cref{sec:truncated_evolution} analyzes truncated evolution (in which unstructured search algorithms may be terminated early with partial chance of successfully identifying the marked item) for both the errorless and approximate adiabatic regimes as well as for Grover's algorithm, which is the same as the errorless case of the original schedule.
In \cref{sec:const_err,sec:sqrt_err,sec:sinsqrt_err,sec:sqrt_scaled_err}, candidate error functions are presented, and their parallelization properties studied.
\Cref{sec:exact_adiabaticity} provides exact numerical simulations of errors and probabilities for the schedules, and a comparison to Grover's algorithm is analyzed.
Finally, \cref{sec:bounded_resources} explores resource-bounded scenarios, with constrained coherence times and limited parallel quantum processors.
We conclude in \cref{sec:discussion} with a summary of findings and potential directions for future work.

\section{Preliminaries}
\label{sec:prelims}

\subsection{Notation}

We will sometimes define some function $f$ taking values $f(s)$ written in terms of a parameter labeled $s$ (later called the \emph{scheduled time} or the \emph{linear interpolation} parameter).
Given also $s(\tau)$ defined in terms of a parameter $\tau$ (later the \emph{dimensionless time}), we will often write $f(\tau)$ in place of $f(s(\tau))$ in contexts where there is no ambiguity.
Where confusion could arise, we will use a tilde $\tilde{f}$ to denote the $\tau$-parametrized version of the function defined by
\begin{equation}
    \tilde{f}(\tau) = f(s(\tau)) .
\end{equation}

\subsection{The Schr{\"o}dinger equation and time-dependent Hamiltonians}

The time-dependent Schr{\"o}dinger equation on $t \in [0,T]$ with a time-dependent Hamiltonian is
\begin{equation}
\label{eq:schrodinger}
    \dv{t} \ket{\psi(t)} = -i \hat{H}(t) \ket{\psi(t)} .
\end{equation}
Parameterize by dimensionless time $\tau = t / T \in [0,1]$ so that
\begin{equation}
    \frac{1}{T} \dv{\tau} \ket*{\psi_{T}(\tau)} = -i H_{T}(\tau) \ket*{\psi_{T}(\tau)} .
\end{equation}
where we have defined
\begin{subequations}
\begin{align}
    \ket*{\psi_{T}(\tau)} & = \ket{\psi(T \tau)} , \\
    H_{T}(\tau) & = \hat{H}(T \tau) .
\end{align}
\end{subequations}
Finally, assume $H_{T}$ may only depend on $T$ via its argument, and so write $\tilde{H} \equiv H_{T}$.
Then
\begin{equation}
\label{eq:schrodinger_dimensionless}
    \dv{\tau} \ket{\psi_{T}(\tau)} = -i T \tilde{H}(\tau) \ket{\psi_{T}(\tau)} .
\end{equation}
This kind of indirect dependence on $T$ is the case, for example, for \emph{interpolating} Hamiltonians of the form
\begin{equation}
\label{eq:interp}
    \tilde{H}(\tau) = [1 - s(\tau)] H_{0} + s(\tau) H_{1} ,
\end{equation}
where $H_{0}$ and $H_{1}$ are some initial and final Hamiltonians, respectively, and the \emph{schedule} $s$ is strictly increasing with $s(0) = 0$ and $s(1) = 1$.
We will also assume that $s$ is differentiable.
Introducing notation for the special case of a linear interpolating Hamiltonian
\begin{equation}
\label{eq:linear_interp}
    H(s) = [1 - s] H_{0} + s H_{1} ,
\end{equation}
we can also write \cref{eq:interp} as
\begin{equation}
\label{eq:interp_as_linear}
    \tilde{H}(\tau) = H(s(\tau)) .
\end{equation}
To summarize the different \emph{time} parameters we keep track of:
\begin{enumerate}
    \item The \emph{duration} $T$ is the total time over which the Hamiltonian evolves from $H_{0}$ to $H_{1}$.
    It is a means to globally control the time scale over which the Hamiltonian varies.
    \item The \emph{physical} time $t \in [0,T]$ is the time according to real clocks, with the time evolution completing after $T$ physical time.
    \item The \emph{dimensionless} time $\tau = \frac{t}{T} \in [0,1]$ is the linear fraction of physical time that has elapsed of the full duration.
    \item The \emph{scheduled} time $s(\tau) \in [0,1]$ relates the passage of real time to the interpolation between initial and final Hamiltonians.
\end{enumerate}

\subsection{The adiabatic theorem}

Let $\ket{\varepsilon_{j}(s)}$ denote the instantaneous eigenstates satisfying
\begin{equation}
    H(s) \ket{\varepsilon_{j}(s)} = E_{j}(s) \ket{\varepsilon_{j}(s)} ,
\end{equation}
with corresponding energies $E_{0}(s) \leq E_{1}(s) \leq \dots$ for all $s \in [0,1]$.
Let the energy gap between the lowest two energy levels at $s$ be denoted
\begin{equation}
    g(s) = E_{1}(s) - E_{0}(s) .
\end{equation}
If we initially prepare the system in the instantaneous ground state $\ket{\varepsilon_{0}(0)}$ of $H(0)$ and adjust the Hamiltonian slowly enough, then the physical state of the system will remain close to the ground state of the evolving Hamiltonian.
This \emph{slowness} and \emph{closeness} is quantified by the adiabatic theorem.
\begin{theorem}[Adiabatic theorem]
    \label{thm:adiabatic_theorem}
    Let $0 < \varepsilon < 1$.
    If the \emph{adiabatic condition}
    \begin{equation}
    \label{eq:adiabatic_cond}
        \frac{2}{T} \frac{\left\lvert \! \mel{\varepsilon_{1}(\tau)}{\tilde{H}^{\prime}(\tau)}{\varepsilon_{0}(\tau)} \right\rvert}{g(\tau)^{2}} \leq \varepsilon
    \end{equation}
    is satisfied for all dimensionless times $\tau \in [0,1]$, then the ground-state probability at any $\tau$ is
    \begin{equation}
    \label{eq:adiabatic_error}
        \lvert \braket{\varepsilon_{0}(\tau)}{\psi_{T}(\tau)} \rvert^{2} \geq 1 - \epsilon(\tau)^{2}
    \end{equation}
    for some error function $\epsilon \colon [0,1] \to [0,\varepsilon]$.
\end{theorem}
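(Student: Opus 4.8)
The plan is to prove \cref{eq:adiabatic_error} by the classical instantaneous‑eigenbasis expansion that underlies the folklore adiabatic theorem (as used by \textcite{roland2002quantum}): the claim is equivalent to bounding by $\varepsilon$ the norm of the component of $\ket{\psi_{T}(\tau)}$ orthogonal to $\ket{\varepsilon_{0}(\tau)}$.

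First I would write the solution of the dimensionless Schr\"odinger equation \cref{eq:schrodinger_dimensionless} in the instantaneous eigenbasis,
\[
    \ket{\psi_{T}(\tau)} = \sum_{j} c_{j}(\tau)\, \euler^{-i T \int_{0}^{\tau} E_{j}(\sigma) \, \mathrm{d}\sigma}\, \ket{\varepsilon_{j}(\tau)} ,
\]
with initial data $c_{j}(0) = \delta_{j0}$ since the system starts in $\ket{\varepsilon_{0}(0)}$, so that $\lvert \braket{\varepsilon_{0}(\tau)}{\psi_{T}(\tau)} \rvert^{2} = \lvert c_{0}(\tau) \rvert^{2}$. Substituting into \cref{eq:schrodinger_dimensionless} and projecting onto $\bra{\varepsilon_{k}(\tau)}$ yields the coupled system
\[
    \dot c_{k} = -c_{k} \braket{\varepsilon_{k}}{\dot\varepsilon_{k}} - \sum_{j \neq k} c_{j}\, \euler^{-i T \int_{0}^{\tau} [E_{j}(\sigma) - E_{k}(\sigma)] \, \mathrm{d}\sigma}\, \braket{\varepsilon_{k}}{\dot\varepsilon_{j}} ,
\]
where the dot denotes $\dv{\tau}$. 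Differentiating the eigenvalue relation $H(s(\tau)) \ket{\varepsilon_{j}(\tau)} = E_{j}(\tau) \ket{\varepsilon_{j}(\tau)}$ and pairing with $\bra{\varepsilon_{k}}$ gives the standard identity $\braket{\varepsilon_{k}}{\dot\varepsilon_{j}} = \mel{\varepsilon_{k}}{\tilde{H}^{\prime}}{\varepsilon_{j}} / [E_{j} - E_{k}]$ for $j \neq k$, which for $k = 1$, $j = 0$ is exactly the transition matrix element appearing in the adiabatic condition \cref{eq:adiabatic_cond}.

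Second, I would reduce to the two lowest levels. For the interpolating search Hamiltonian \cref{eq:linear_interp} with $H_{0}$ and $H_{1}$ as in \cref{prot:prob_p_measurement}, the dynamics stays in the two‑dimensional span of $\ket{+}^{\otimes n}$ and $\ket{m}$ (developed in \cref{sec:prelims}), so only $c_{0}$ and $c_{1}$ are nonzero and the sum collapses to a single term; more generally the higher levels contribute only at higher order. The diagonal term $\braket{\varepsilon_{k}}{\dot\varepsilon_{k}}$ is purely imaginary (from $\braket{\varepsilon_{k}}{\varepsilon_{k}} = 1$) and only rotates the phase of $c_{k}$, so the leakage $\lvert b(\tau) \rvert$, $b := c_{1}$, is governed by $\dot b = - c_{0}\, \euler^{-i T \int_{0}^{\tau} [E_{0} - E_{1}] \, \mathrm{d}\sigma} \, \mel{\varepsilon_{1}}{\tilde{H}^{\prime}}{\varepsilon_{0}} / [E_{0} - E_{1}]$ with $b(0) = 0$. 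Integrating and applying the triangle inequality naively gives only $\lvert b(\tau) \rvert \leq \int_{0}^{\tau} \lvert \mel{\varepsilon_{1}}{\tilde{H}^{\prime}}{\varepsilon_{0}} \rvert / g^{2} \, \mathrm{d}\sigma = \bigO(\varepsilon T)$ by \cref{eq:adiabatic_cond}, which is useless. The fix is to integrate by parts in $\sigma$, exploiting the rapidly oscillating phase $\euler^{-i T \int [E_{0} - E_{1}]}$: the resulting boundary term is of order $\lvert \mel{\varepsilon_{1}}{\tilde{H}^{\prime}}{\varepsilon_{0}} \rvert / [T g^{2}] \leq \varepsilon / 2$, exactly the type of bound required.

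The step I expect to be the main obstacle is controlling the \emph{remaining} integral left over by this integration by parts, whose integrand is $(T g)^{-1} \dv{\tau}$ of the ratio $\mel{\varepsilon_{1}}{\tilde{H}^{\prime}}{\varepsilon_{0}} / g$: bounding it honestly requires extra regularity along the schedule (for instance that the gap and the transition element vary monotonically, so the term telescopes, or that their derivatives are small enough to make it subleading in $1/T$). The cleanest route in the present setting is not to prove the general statement at all, but instead to note that $g(\tau)$, $\mel{\varepsilon_{1}(\tau)}{\tilde{H}^{\prime}(\tau)}{\varepsilon_{0}(\tau)}$, and the schedule $s(\tau)$ are all available in closed form (see \cref{sec:optimal_schedules}), so the estimate can be pushed through directly for our Hamiltonian and cross‑checked numerically. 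This is exactly the juncture at which the folklore theorem is known to be delicate in general (cf.\ \cite{marzlin2004inconsistency,tong2005quantitative}); granting the estimate, one obtains $\lvert c_{0}(\tau) \rvert^{2} = 1 - \lvert b(\tau) \rvert^{2} \geq 1 - \epsilon(\tau)^{2}$ with $\epsilon(\tau) := \lvert b(\tau) \rvert \in [0, \varepsilon]$, which is \cref{eq:adiabatic_error}.
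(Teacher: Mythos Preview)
The paper does not prove \cref{thm:adiabatic_theorem} at all. Immediately after the statement the authors attach a remark noting that, despite the name, sufficient conditions for the implication \cref{eq:adiabatic_cond}~$\Rightarrow$~\cref{eq:adiabatic_error} are not known in general and that counterexamples exist; they explicitly say they only need the statement for the specific search Hamiltonian of \cref{sec:search_hamiltonian}. The actual ``verification'' in the paper is purely numerical: in \cref{sec:exact_adiabaticity} they solve the reduced two-dimensional Schr{\"o}dinger system and exhibit \cref{fig:error_numerical} as evidence that the error stays below $\varepsilon$ for the parameters tested. No analytic argument is offered.

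Your proposal is therefore not a different proof of the paper's proof---it is a proof attempt where the paper gives none. The eigenbasis expansion and integration-by-parts sketch you outline is the standard folklore derivation, and you correctly flag that the residual integral after integrating by parts is exactly the place where the argument is known to fail without further hypotheses (smoothness of the gap and transition element, or monotonicity allowing a telescoping bound). Your fallback---use the closed forms for $g$, the matrix element, and the schedule from \cref{sec:optimal_schedules} to push the estimate through for this particular Hamiltonian, and cross-check numerically---is essentially what the paper does, except that the paper skips the analytic estimate entirely and goes straight to numerics. So your approach is strictly more ambitious, but the honest gap you identify in the remainder term is real and is precisely why the paper declines to call this a proved theorem for general Hamiltonians.
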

\begin{remark}
    Although called a ``theorem'', sufficient conditions on the Hamiltonian for this adiabatic theorem to hold are currently unknown and various pathological counterexamples have been discovered \cite{marzlin2004inconsistency,tong2005quantitative,du2008experimental,wu2008adiabatic}.
    Nevertheless, we need only that it holds in the special case of our unstructured search Hamiltonian $\tilde{H}$, which is defined in \cref{sec:search_hamiltonian}.
\end{remark}

The resulting closeness in \cref{thm:adiabatic_theorem} includes that at the final physical time $T$ (when $\tau = 1$), which is usually the time of interest in adiabatic quantum computing.
Under the adiabatic theorem, the square of the (typically small) diabaticity parameter $\varepsilon \ll 1$ has the interpretation of a maximum probability with which at any particular time the system would be found to have been excited from its instantaneous ground state if measured.
The error function appearing in \cref{eq:adiabatic_error} is usually taken to be the special case of the largest possible constant $\epsilon(\tau) = \varepsilon$.
Here, we have allowed other functions (for which \cref{eq:adiabatic_error} may or may not follow from the assumptions of the theorem).
We will consider some possible choices for these and their diabaticity in \cref{sec:approx_adiabatic}.

Note that we choose to include a factor of $2$ in \cref{eq:adiabatic_cond} compared to some other statements of the adiabatic theorem, which we could also absorb into $\varepsilon$ on the right side.
Our version of the theorem that \cref{eq:adiabatic_cond} implies \cref{eq:adiabatic_error} is weaker than the usual statement in the sense that it automatically holds if the theorem is true when stated without the factor of $2$.

\subsection{Hamiltonian for unstructured search}
\label{sec:search_hamiltonian}

In the unstructured search problem over $N = 2^{n}$ strings in $\{0,1\}^{n}$, the final Hamiltonian takes the form
\begin{equation}
\label{eq:H1}
    H_{1} = I - \ketbra{m} ,
\end{equation}
where $m \in \{0,1\}^{n}$ and $\ket{m}$ is known as the marked state.
While there are many different possible choices for the initial Hamiltonian, the most mathematically convenient is
\begin{equation}
\label{eq:H0}
    H_{0} = I - \ketbra{\phi} ,
\end{equation}
where $\ket{\phi} = \frac{1}{\sqrt{N}} \sum_{j=0}^{N-1} \ket{j} = \ket{+}^{\otimes n}$ is the $n$-qubit uniform superposition state.
The ground states of $H_{0}$ and $H_{1}$ are $\ket{m}$ and $\ket{\phi}$ respectively, with corresponding energies of $0$ (this is by convention and is the reason for the identity operators appearing in \cref{eq:H0,eq:H1}).
If $H$ interpolates these $H_{0}$ and $H_{1}$ as in \cref{eq:linear_interp} then its two lowest instantaneous energy levels are
\begin{subequations}
\begin{align}
    E_{0}(s) & = \frac{1}{2} [1 - g(s)] , \\
    E_{1}(s) & = \frac{1}{2} [1 + g(s)] ,
\end{align}
\end{subequations}
where the energy gap $g(s)$ is
\begin{equation}
    g(s) = \sqrt{1 - 4 \frac{N-1}{N} s(1-s)} .
\end{equation}
The minimum energy gap, occurring at $s = 1/2$, is
\begin{equation}
    g_{\text{min}} = \frac{1}{\sqrt{N}} .
\end{equation}

\begin{figure}
    \centering
    \includegraphics{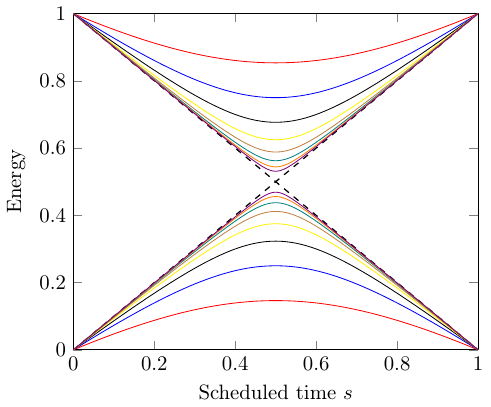}
    \caption{
        Ground and first-excited energy levels for the linear interpolating Hamiltonian defined via \cref{eq:linear_interp} by the initial and final Hamiltonians of \cref{eq:H0,eq:H1}.
        The levels shown are for search domains of sizes $N = 2^{n}$ where $n \in \{1, \dots, 8\}$.
        Dashed lines represent the asymptotic limits of the two levels as $N \to \infty$.
    }
    \label{fig:energy_levels}
\end{figure}

Now define the uniform superposition over all unmarked basis states
\begin{equation}
\label{eq:marked_perp}
    \ket{m^{\perp}} = \frac{1}{\sqrt{N-1}} \sum_{j \neq m} \ket{j} = \sqrt{\frac{N}{N-1}} \ket{\phi} - \frac{1}{\sqrt{N-1}} \ket{m} .
\end{equation}
The two lowest energy states of $H(s)$ are
\begin{subequations}
\label{eq:gs_vectors}
\begin{align}
    \label{eq:gs_vector_0}
    \ket{\varepsilon_{0}(s)} & = \alpha(s) \ket{m} + \beta(s) \ket{m^{\perp}} , \\
    \ket{\varepsilon_{1}(s)} & = - \beta(s) \ket{m} + \alpha(s) \ket{m^{\perp}} ,
\end{align}
\end{subequations}
where the coefficients $\alpha$ and $\beta$ are given by
\begin{subequations}
\label{eq:gs_coefficients}
\begin{align}
    \label{eq:gs_coefficient_marked}
    \alpha(s) & = \sqrt{\frac{1 + c(s)}{2}} , \\
    \beta(s) & = \sqrt{\frac{1 - c(s)}{2}} ,
\end{align}
\end{subequations}
with $- \frac{N-2}{N} \leq c(s) \leq 1$ and defined by
\begin{equation}
\label{eq:cos}
    c(s) = \frac{1}{g(s)} \mathopen{}\left[ 1 - 2 \frac{N-1}{N} (1-s) \right]\mathclose{} .
\end{equation}

\section{Optimal schedules for unstructured search}
\label{sec:optimal_schedules}

Explicit calculation (see \cref{sec:adiabatic_numerator}) shows
\begin{equation}
\label{eq:matrix_element}
    \left\lvert \! \mel{\varepsilon_{1}(s)}{H^{\prime}(s)}{\varepsilon_{0}(s)} \right\rvert
    = \frac{\sqrt{N-1}}{N} \frac{1}{g(s)}
    \leq \frac{\sqrt{N-1}}{N} \frac{1}{g_{\text{min}}}
    = \sqrt{1 - \frac{1}{N}}
    < 1 .
\end{equation}
Hence, according to the adiabatic condition of \cref{eq:adiabatic_cond} using a linear schedule $s(\tau) = \tau$, it is seen that taking
\begin{equation}
    T \geq \frac{2}{\varepsilon g_{\text{min}}^{2}} = \frac{2N}{\varepsilon}
\end{equation}
results in sufficiently slow variation of the Hamiltonian to satisfy the adiabatic condition.
Note then that by choosing a linear schedule we have not found any speedup over the classical time complexity of unstructured search.

Due to \cref{eq:interp_as_linear} we have
\begin{equation}
    \dv{\tilde{H}}{\tau} = \dv{s}{\tau} \dv{H}{s} .
\end{equation}
Inserting this into \cref{eq:adiabatic_cond} gives
\begin{equation}
    \frac{\left\lvert \! \mel{\varepsilon_{1}(s(\tau))}{H^{\prime}(s(\tau))}{\varepsilon_{0}(s(\tau))} \right\rvert}{g(s(\tau))^{2}} \frac{2}{T} \dv{s}{\tau} \leq \varepsilon .
\end{equation}
Treating $\tau$ as the inverse to $s$ and rearranging gives
\begin{equation}
\label{eq:adiabatic_ode}
    \dv{t}{s} = T \dv{\tau}{s} \geq \frac{2}{\varepsilon} \frac{\left\lvert \! \mel{\varepsilon_{1}(s)}{H^{\prime}(s)}{\varepsilon_{0}(s)} \right\rvert}{g(s)^{2}} .
\end{equation}
\Textcite{roland2002quantum} used the adiabatic condition in the form of \cref{eq:adiabatic_ode} to explicitly extract an optimal schedule and recover the same quadratic speedup provided by Grover's algorithm after evolving to the final time ($\tau = 1$).
They bounded the numerator in \cref{eq:adiabatic_ode} above by a constant, as is shown possible by \cref{eq:matrix_element}.
Here we instead use its exact expression.

Rewriting the right-hand side of \cref{eq:adiabatic_ode} explicitly using the first equality in \cref{eq:matrix_element} and assuming saturation of the inequality in the adiabatic condition for optimality gives
\begin{equation}
    T \dv{\tau}{s}
    = \frac{\sqrt{N-1}}{\varepsilon N} \frac{2}{g(s)^{3}} .
\end{equation}
We now solve for $\tau$.
Integrating with the boundary condition $s(0) = 0$ gives
\begin{equation}
\label{eq:integrate_tau}
\begin{split}
    t = T \tau
    & = \frac{2 \sqrt{N-1}}{\varepsilon N} \int_{0}^{s} \frac{1}{g(x)^{3}} \dd{x} \\
    & = \frac{2 \sqrt{N-1}}{\varepsilon N} \int_{0}^{s} \left[ 1 - 4 \frac{N-1}{N} x(1-x) \right]^{- \frac{3}{2}} \dd{x} \\
    & = \frac{\sqrt{N-1}}{\varepsilon} \mathopen{}\left[ 1 + \frac{2s-1}{\sqrt{1 - 4 \frac{N-1}{N} s(1-s)}} \right]\mathclose{} .
\end{split}
\end{equation}
Applying the other boundary condition $s(1) = 1$ gives the relation between the $\varepsilon$ appearing in \cref{eq:adiabatic_cond} and the final physical time
\begin{equation}
\label{eq:final_time}
    T = \frac{2 \sqrt{N-1}}{\varepsilon} ,
\end{equation}
which also shows that we recover the same Grover speedup as in \cite{roland2002quantum} after the full evolution has completed.
Combining \cref{eq:integrate_tau,eq:final_time}, the dimensionless time corresponding to scheduled time $s$ is
\begin{equation}
\label{eq:tau_at_s}
    \tau = \frac{1}{2} \mathopen{}\left[ 1 + \frac{2s-1}{\sqrt{1 - 4 \frac{N-1}{N} s(1-s)}} \right]\mathclose{} .
\end{equation}
This can also be inverted to find the optimal schedule
\begin{equation}
\label{eq:optimal_schedule}
    s(\tau) = \frac{1}{2} \mathopen{}\left[ 1 + \frac{2 \tau - 1}{\sqrt{1 + 4 (N-1) \tau (1 - \tau)}} \right]\mathclose{} .
\end{equation}
See \cref{tab:schedule_comparison} for a comparison with the original optimal schedule of \textcite{roland2002quantum}.

\begin{table}
    \centering
    \caption{
        Comparison between our proposed schedule and the original optimal schedule of \textcite{roland2002quantum} for the evolution duration $T$ (in terms of $N$ and $\varepsilon$) and relations between the dimensionless time $\tau$ and scheduled time $s$ (which are inverse of each other).
    }
    \label{tab:schedule_comparison}
    \begin{tabular}{lccc}
        \toprule
         Schedule & Duration parameter $T$ & Dimensionless time $\tau$ & Scheduled time $s$ \\
         \midrule
         This work & $\frac{2}{\varepsilon} \sqrt{N-1}$ & $\frac{1}{2} \mathopen{}\left[ 1 + \frac{2s-1}{\sqrt{1 - 4 \frac{N-1}{N} s(1-s)}} \right]\mathclose{}$ & $\frac{1}{2} \mathopen{}\left[ 1 + \frac{2 \tau - 1}{\sqrt{1 + 4 (N-1) \tau (1 - \tau)}} \right]\mathclose{}$ \\
         \addlinespace
         Original \cite{roland2002quantum} & $\frac{2}{\varepsilon} \frac{N \arctan{\sqrt{N-1}}}{\sqrt{N-1}}$ & $\frac{1}{2} \mathopen{}\left[ 1 + \frac{\arctan[(2s-1) \sqrt{N-1}]}{\arctan{\sqrt{N-1}}} \right]\mathclose{}$ & $\frac{1}{2} \mathopen{}\left[ 1 + \frac{\tan[(2 \tau - 1) \arctan{\sqrt{N-1}}]}{\sqrt{N-1}} \right]\mathclose{}$ \\
         \bottomrule
    \end{tabular}
\end{table}

As can be seen from the example in \cref{fig:schedule_comparison}, the proposed schedule differs qualitatively from the original in that we increase the scheduled time parameter $s$ even more rapidly with respect to physical time around the start and end of evolution than in the original, while still satisfying the adiabatic condition of \cref{eq:adiabatic_cond}.

\begin{figure}
    \centering
    \includegraphics{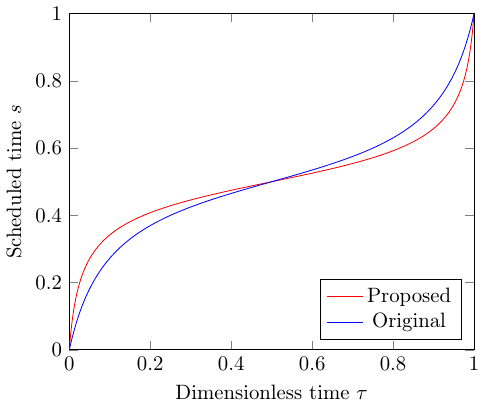}
    \caption{
        Example schedule functions of search domain size $N=16$ for our proposed optimal schedule \cref{eq:optimal_schedule} and the original optimal schedule of \textcite{roland2002quantum}.
        Both yield a $T = \bigO(\sqrt{N})$ time complexity for the final time at $\tau=1$ (see \cref{tab:schedule_comparison}), however, the proposed schedule allows us to increase the scheduled time $s$ faster with respect to physical (dimensionless shown here) time $\tau$ at the start and end of evolution.
    }
    \label{fig:schedule_comparison}
\end{figure}

\section{Truncated adiabatic evolution}
\label{sec:truncated_evolution}

We now consider a truncated evolution, in which we may stop varying the Hamiltonian at some \emph{cutoff time} before the final Hamiltonian $H_{1}$ has been reached.
That is, we may choose to measure the state of the system when $\tau < 1$ and observe the marked state with some probability.

\subsection{Evolution of the ideal ground state}

Let us first consider an (unphysical) ideal evolution in which the initial state remains in the exact instantaneous ground state of our Hamiltonian at all times.
At a scheduled time of $s \in [0,1]$ in our interpolating Hamiltonian \cref{eq:linear_interp}, the probability $q \in \left[ \frac{1}{N}, 1 \right]$ with which the marked state would occur if a measurement were performed on its instantaneous ground state is found from \cref{eq:gs_vector_0} to be
\begin{equation}
\label{eq:marked_prob}
    q = \lvert \alpha(s) \rvert^{2}
    = \frac{1 + c(s)}{2}
    = \frac{1}{2} \mathopen{}\left[ 1 + \frac{1 - 2 \frac{N-1}{N} (1-s)}{\sqrt{1 - 4 \frac{N-1}{N} s(1-s)}} \right]\mathclose{} .
\end{equation}
This function is depicted in \cref{fig:s_q}.
It is also possible to invert \cref{eq:marked_prob} to instead find the a general scheduled time in terms of the marked probability
\begin{equation}
\label{eq:s_of_p_ideal}
    s =
    \begin{cases}
        \frac{1}{2} \mathopen{}\left[ 1 - \frac{\frac{N}{\sqrt{N-1}} 2q(1-2q) \sqrt{\frac{1-q}{q}} + 1}{4 N q(1-q) - 1} \right]\mathclose{} & \text{if $q \neq \frac{1}{2} \mathopen{}\left[ 1 + \sqrt{\frac{N-1}{N}} \right]\mathclose{}$,} \\
        \frac{1}{4} \mathopen{}\left( 3 - \frac{1}{N-1} \right)\mathclose{} & \text{if $q = \frac{1}{2} \mathopen{}\left[ 1 + \sqrt{\frac{N-1}{N}} \right]\mathclose{}$.}
    \end{cases}
\end{equation}

\begin{figure}
    \centering
    \includegraphics{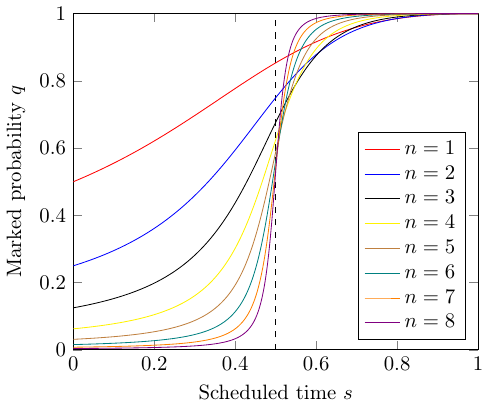}
    \caption{
        The probability with which a measurement of the instantaneous ground state of the linear interpolating Hamiltonian \cref{eq:linear_interp} defined by initial and final Hamiltonians \cref{eq:H0,eq:H1} will result in the marked state $\ket{m}$ as outcome.
        Different lines correspond to different search domain sizes $N = 2^{n}$.
    }
    \label{fig:s_q}
\end{figure}

Let us rewrite the ideal probability $q$ in terms of dimensionless time $\tau$ in the cases of both schedules that we consider.
We do this by inserting each schedule (see \cref{tab:schedule_comparison}) into \cref{eq:marked_prob} as $s$.

\paragraph{Original schedule}

Choosing the original schedule results in (see \cref{sec:orig_sched_prob})
\begin{equation}
\label{eq:original_prob_tau}
    q = \cos^{2} \mathopen{}\left[ (1 - \tau) \arccos{\frac{1}{\sqrt{N}}} \right]\mathclose{} .
\end{equation}
\Cref{eq:original_prob_tau} can also be inverted on the domain $q \in \left[ \frac{1}{N}, 1 \right]$ as
\begin{equation}
    \tau = 1 - \frac{\arccos{\sqrt{q}}}{\arccos{\frac{1}{\sqrt{N}}}} .
\end{equation}
Finally, as $N \to \infty$ we get
\begin{subequations}
\label{eq:grover_asymptotic}
\begin{align}
    q & \to \sin^{2} \mathopen{}\left( \frac{\pi}{2} \tau \right)\mathclose{} , \\
    \label{eq:asymptotic_tau_grover}
    \tau & \to \frac{2}{\pi} \arcsin{\sqrt{q}} .
\end{align}
\end{subequations}

It is interesting to note that \cref{eq:original_prob_tau} and the subsequent expressions are the exact same as are found for Grover's algorithm if $\tau$ instead represents the fraction of Grover iterations that have elapsed out of the total number required to reach unit probability (see \cref{sec:grover_probability}).
With every Grover iteration being assumed to take an identical and constant amount of physical time, $\tau$ represents the same dimensionless time quantity in both adiabatic algorithms and Grover's algorithm.
This implies that, rather than considering the adiabatic condition \cref{eq:adiabatic_cond} to derive the original schedule of \textcite{roland2002quantum}, one could instead reverse engineer it from the probability given by Grover's algorithm and then observe its adiabaticity a posteriori.

\begin{theorem}
\label{thm:grover_equiv}
    In both Grover's algorithm and the original adiabatic unstructured search algorithm of \textcite{roland2002quantum} (in its errorless form), if the state of their systems are measured in the computational basis at any dimensionless time $\tau \in [0,1]$, the probability that the outcome is the marked state is given by \cref{eq:original_prob_tau} to be
    \begin{equation}
        \cos^{2} \mathopen{}\left[ (1 - \tau) \arccos{\frac{1}{\sqrt{N}}} \right]\mathclose{} ,
    \end{equation}
    where $N$ is the number of elements in the search domain.
\end{theorem}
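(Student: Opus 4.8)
The plan is to establish the claimed formula independently for the two algorithms, since the theorem really asserts that two a priori unrelated quantities coincide with the same closed form; the common value $\cos^{2}[(1-\tau)\arccos(1/\sqrt{N})]$ then follows. I would treat the errorless original adiabatic schedule first and Grover's algorithm second.

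For the adiabatic half I would substitute the original schedule's expression for $s$ in terms of $\tau$ (second row of \cref{tab:schedule_comparison}) directly into the ideal marked probability \cref{eq:marked_prob}. The natural device is the substitution $\phi = \arctan\sqrt{N-1}$, which satisfies $\cos\phi = 1/\sqrt{N}$ and hence $\phi = \arccos(1/\sqrt{N})$, together with the abbreviation $u = (2\tau-1)\phi$; then $2s-1 = \tan u\cot\phi$ and $(N-1)/N = \sin^{2}\phi$. With these, the numerator $1 - 2\frac{N-1}{N}(1-s)$ collapses to $\cos\phi\,\cos(\phi-u)/\cos u$ and the radicand $1 - 4\frac{N-1}{N}s(1-s)$ collapses to $\cos^{2}\phi/\cos^{2}u$, so the fraction inside \cref{eq:marked_prob} is simply $\cos(\phi-u)$. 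Since $\phi-u = 2(1-\tau)\phi$, a half-angle identity gives $q = \tfrac12[1+\cos(2(1-\tau)\phi)] = \cos^{2}[(1-\tau)\phi]$, which is exactly \cref{eq:original_prob_tau}. The one point of care is the sign when taking the square root: $\tau\in[0,1]$ forces $u\in[-\phi,\phi]\subset(-\pi/2,\pi/2)$, so $\cos u>0$ and no ambiguity arises.

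For Grover's algorithm I would invoke the standard two-dimensional picture. Writing $\ket{+}^{\otimes n} = \sin\theta\,\ket{m} + \cos\theta\,\ket{m^{\perp}}$ with $\theta = \arcsin(1/\sqrt{N})$ (so $\braket{m}{+^{\otimes n}} = 1/\sqrt{N}$), each Grover iteration is a rotation by $2\theta$ within the $\{\ket{m},\ket{m^{\perp}}\}$ plane, so after $k$ iterations the marked-state amplitude is $\sin[(2k+1)\theta]$ and the success probability is $\sin^{2}[(2k+1)\theta]$. Defining $k^{*}$ by $(2k^{*}+1)\theta = \pi/2$ — the iteration count reaching unit probability — and $\tau = k/k^{*}$, one has $2k^{*}\theta = \pi/2 - \theta$, hence $(2k+1)\theta = \tau(\pi/2-\theta)+\theta$ and therefore $\pi/2 - (2k+1)\theta = (1-\tau)(\pi/2-\theta)$. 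Using $\sin x = \cos(\pi/2 - x)$ and $\pi/2 - \theta = \arccos(1/\sqrt{N})$ gives $\sin[(2k+1)\theta] = \cos[(1-\tau)\arccos(1/\sqrt{N})]$, and squaring yields precisely the same function as in the adiabatic case.

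The main obstacle is not any single computation (both reductions are routine trigonometry) but the interpretation of ``dimensionless time'' for Grover: $k$ is an integer and $k^{*}$ need not be, so strictly the identity holds only at the grid points $\tau = k/k^{*}$. I would therefore make explicit at the outset the convention that $\tau$ for Grover denotes the fraction of (idealized, equal-duration) iterations elapsed, with fractional values interpreted through the very same rotation formula $\sin^{2}[(2k+1)\theta]$; this is what makes $\cos^{2}[(1-\tau)\arccos(1/\sqrt{N})]$ a meaningful object for all $\tau\in[0,1]$ and lets the comparison with the adiabatic schedule, whose time parameter is genuinely continuous, be stated cleanly. With that convention in place the theorem is exactly the conjunction of the two displayed reductions above.
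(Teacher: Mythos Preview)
Your proposal is correct and follows essentially the same route as the paper: for the adiabatic case you substitute the Roland--Cerf schedule into \cref{eq:marked_prob} and reduce via the substitution $\phi=\arctan\sqrt{N-1}=\arccos(1/\sqrt{N})$ and an angle-addition/half-angle identity (the paper's \cref{sec:orig_sched_prob} does exactly this with $\vartheta$ in place of $\phi$), and for Grover you start from $\sin[(2k+1)\theta]$, normalize time by the iteration count $k^{*}$ at which probability reaches one, and use $\pi/2-\theta=\arccos(1/\sqrt{N})$ (matching \cref{sec:grover_probability}). Your explicit remark about the integer-versus-continuous interpretation of $\tau$ for Grover is a welcome clarification that the paper leaves implicit.
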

\begin{proof}
    See \cref{sec:grover_probability} for Grover's algorithm and \cref{sec:orig_sched_prob} for the adiabatic case with its original interpolation schedule.
\end{proof}

\paragraph{Proposed schedule}

Choosing our schedule \cref{eq:optimal_schedule} results in
\begin{equation}
\label{eq:q_of_tau_proposed}
    q = \frac{1}{2N} \mathopen{}\left[ 1 + 2(N-1)\tau + \sqrt{1 + 4(N-1) \tau (1-\tau)} \right]\mathclose{} .
\end{equation}
Inverting this yields
\begin{equation}
\label{eq:tau_of_q}
    \tau = q - \sqrt{\frac{q(1-q)}{N-1}} \leq \frac{N}{N-1} q - \frac{1}{N-1} \leq q
\end{equation}
on the domain $q \in [\frac{1}{N}, 1]$.
See \cref{fig:q_tau} for a visual depiction of this function, as well as the equivalent found using the original schedule of \textcite{roland2002quantum} (and Grover's algorithm).

\begin{figure}
    \centering
    \includegraphics{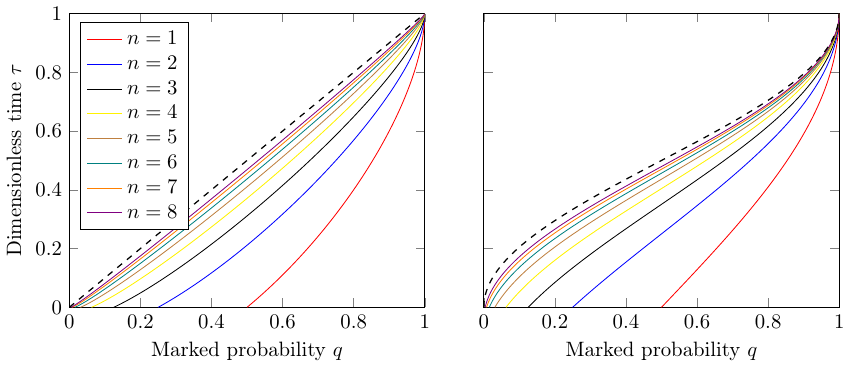}
    \caption{
        Dimensionless time $\tau$ of evolution that must pass before probability $q$ that a measurement of the instantaneous ground state of \cref{eq:interp} in the computational basis would yield the marked state $\ket{m}$, depending on the schedule used.
        The \textbf{left} plot corresponds to our proposed schedule in \cref{eq:optimal_schedule} while the \textbf{right} plot corresponds to the original optimal schedule of \textcite{roland2002quantum} (and also to Grover's algorithm).
        Search domain sizes shown are $N = 2^{n}$.
        The asymptotic upper limits as $N \to \infty$ are depicted by dashed lines ($\tau \to q$ and \cref{eq:asymptotic_tau_grover}, respectively).
        Note that in the original (right-side) case $\tau$ cannot be bounded above for all $N$ by any function directly proportional to $q$ due to its diverging gradient at $q=1/N$ as $N \to \infty$.
    }
    \label{fig:q_tau}
\end{figure}

Combining \cref{eq:tau_of_q} with \cref{eq:final_time}, we see that under ideal evolution (exactly tracking the instantaneous ground state of our Hamiltonian), we would achieve a marked state probability of $q$ in physical time
\begin{equation}
\label{eq:tq_linear}
    t = T \tau \leq \frac{2 \sqrt{N}}{\varepsilon} q .
\end{equation}
\begin{theorem}
\label{thm:ideal_parallelization}
    In the ideal (unphysical) adiabatic regime of vanishing error $\epsilon(\tau) = 0$, at most $S = \lceil N^{\eta} \rceil$ independent quantum processors running in parallel are required to perform unstructured search in $\bigO(N^{\frac{1}{2} - \eta})$ time for any $\eta \geq 0$.
\end{theorem}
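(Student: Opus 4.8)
The plan is to exploit the linear bound $t \leq \frac{2\sqrt{N}}{\varepsilon} q$ from \cref{eq:tq_linear} together with a simple parallel-sampling argument. In the ideal regime, at any dimensionless time $\tau$ the instantaneous ground state yields the marked state with probability exactly $q(\tau)$, and by \cref{eq:tq_linear} obtaining a given target probability $q$ costs physical time at most $\frac{2\sqrt{N}}{\varepsilon} q$. The idea is: run $S$ copies of the evolution in parallel, each only up to a relatively small cutoff time (hence small per-copy success probability $q$), measure all of them, and output any copy that returns the marked state. The probability that at least one of the $S$ independent copies succeeds is $1 - (1-q)^{S}$, which we want to be $\Omega(1)$ (say, at least some fixed constant like $1/2$, or boosted to $1-\delta$ by a further constant blow-up in $S$).

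\medskip
\textbf{Step 1 (choose the per-copy probability).} Given $\eta \geq 0$ and $S = \lceil N^{\eta} \rceil$, pick the per-copy target probability $q = \Theta(1/S) = \Theta(N^{-\eta})$ — concretely one can take $q = \min\{1, c/S\}$ for a suitable constant $c$, noting $q \geq 1/N$ holds automatically once $\eta \leq 1/2$ (and for $\eta > 1/2$ one has $S \geq \sqrt{N}$, so even $q = 1/N$ per copy suffices and the bound $\bigO(N^{1/2-\eta})$ becomes $\bigO(1)$, consistent with a constant cutoff).

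\textbf{Step 2 (verify the success probability).} With $q \sim c/S$, the failure probability of all copies is $(1-q)^{S} \leq e^{-qS} = e^{-c}$, a constant strictly less than $1$; hence the marked state is found with constant probability. (If one wants success probability $\geq 1 - \delta$, replace $S$ by $\lceil \frac{1}{c}\ln(1/\delta)\rceil \cdot \lceil N^{\eta}\rceil$, still $\bigO(N^{\eta})$ processors.)

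\textbf{Step 3 (verify the running time).} Each copy runs for physical time $t = T\tau$ where, by \cref{eq:tq_linear}, $t \leq \frac{2\sqrt{N}}{\varepsilon} q = \bigO(\sqrt{N}/S) = \bigO(\sqrt{N} \cdot N^{-\eta}) = \bigO(N^{1/2 - \eta})$; the initial state preparation $\ket{+}^{\otimes n}$ and the final measurement each take constant time, so the wall-clock time of the parallel procedure is $\bigO(N^{1/2-\eta})$ as claimed. Finally, identifying $\tau$ from $q$ via \cref{eq:tau_of_q} (or directly running the schedule \cref{eq:optimal_schedule} until the prescribed cutoff) is $\bigO(1)$ classical overhead per processor.

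\medskip
\textbf{The main subtlety} is not any hard calculation but rather making the boundary cases and the word ``ideal'' honest: one must state explicitly that this relies on the errorless assumption $\epsilon(\tau) = 0$ (so that measuring at the cutoff really does yield the marked state with probability exactly $q(\tau)$ rather than something smaller), and one must handle the regimes $\eta \leq 1/2$ versus $\eta > 1/2$ cleanly — in the latter, $N^{1/2-\eta} < 1$, so the claimed $\bigO(N^{1/2-\eta})$ should be read as $\bigO(1)$, which is exactly what a single per-copy shot with a constant-time cutoff delivers. A secondary point worth a sentence is that the $q \geq 1/N$ constraint on the domain of \cref{eq:tq_linear} is automatically met by the choice in Step 1. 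Everything else is the routine independence/union-bound estimate and the linearity already packaged in \cref{thm:ideal_parallelization}'s hypothesis via \cref{eq:tq_linear}.
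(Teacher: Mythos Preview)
Your proposal is correct and follows essentially the same approach as the paper: set the per-copy target probability to $q = 1/S$, invoke the linear bound \cref{eq:tq_linear} to get per-copy time $\bigO(\sqrt{N}/S) = \bigO(N^{1/2-\eta})$, and bound the all-fail probability by $(1-1/S)^{S} \leq \euler^{-1}$. The paper handles amplification by repeating the whole protocol $R$ times rather than inflating $S$, and it does not separately discuss the $\eta > 1/2$ regime or the $q \geq 1/N$ constraint, but these are minor elaborations on the same argument.
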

\begin{proof}
    Choose $q = 1/S$ for each quantum processor and perform their evolution in parallel.
    Measuring the state of each processor after physical time $t = \frac{2\sqrt{N}}{\varepsilon S} = \frac{2\sqrt{N}}{\varepsilon \lceil N^{\eta} \rceil} = \bigO(N^{\frac{1}{2} - \eta})$ has passed, we expect exactly $1$ of the outcomes to have been the marked state.
    The probability that after measuring, none of the outcomes were marked is $(1 - 1/S)^{S} \leq \euler^{-1}$.
    The protocol of preparing, evolving, and measuring may be repeated $R \geq 1$ times to achieve some choice of arbitrarily small probability at most $\euler^{-R}$ that no outcome is marked.
\end{proof}

In other words, in this ideal (and unphysical) adiabatic regime, it would be possible to perfectly trade $\bigO(\sqrt{N})$ time on $\bigO(1)$ quantum processors for $\bigO(1)$ time on $\bigO(\sqrt{N})$ quantum processors.
On the other hand, \cref{eq:grover_asymptotic} shows that for the original optimal schedule of \textcite{roland2002quantum} and Grover's algorithm, it is not possible to bound $\tau$ above for all $N$ by any function directly proportional to $q$ as in \cref{eq:tq_linear}.
Therefore, it is not possible to efficiently parallelize quantum unstructured search even in the ideal adiabatic regime using the original adiabatic schedule or Grover's algorithm \cite{zalka1999grover}.

\subsection{Approximately adiabatic regime}
\label{sec:approx_adiabatic}

Thus far we have only made use of the adiabatic condition of \cref{eq:adiabatic_cond} (to derive a satisfactory candidate schedule), and have not considered the deviation of the system from the instantaneous ground state of our Hamiltonian over the course of its evolution.
This is important because measurement of the physical state may yield the marked state with a lower probability than measurement of the instantaneous ground state of the Hamiltonian (something that we cannot physically perform).

Consider initially preparing the ground state $\ket{\psi(0)} = \ket{\varepsilon_{0}(0)}$ and then evolving according to our schedule of \cref{eq:optimal_schedule} until a physical time $t$ (equivalently dimensionless time $\tau$).
If the ideal ground state $\ket{\varepsilon_{0}(\tau)}$ were to be measured in the computational basis then the marked state would occur with a probability $q$ given by \cref{eq:q_of_tau_proposed}.
If the physical state $\ket{\psi(\tau)}$ were measured, let us call the probability of finding the marked state $p$.
This is the probability of interest to us, as it is what would actually be observed in an experiment.
Due to the adiabatic theorem (\cref{thm:adiabatic_theorem}) the state satisfies \cref{eq:adiabatic_error}.
We may further assume without loss of generality that $\ket{\psi(\tau)}$ takes the form of a linear combination of the marked basis state and the uniform superposition of all unmarked basis states.
This is because among states of this form are those that give the lowest possible value for $p$ while still satisfying \cref{eq:adiabatic_error}.
In this case, \cref{eq:adiabatic_error} becomes
\begin{equation}
\label{eq:overlap_in_pq}
    pq + (1-p)(1-q) + 2 \sqrt{pq(1-p)(1-q)} \geq 1 - \epsilon(\tau)^{2} .
\end{equation}
Assuming the worst case that $p \leq q$, \cref{eq:overlap_in_pq} implies
\begin{equation}
\label{eq:q_of_p_inequality}
    q \leq
    \begin{cases}
        p + \epsilon^{2} (1-2p) + 2 \sqrt{\epsilon^{2} (1 - \epsilon^{2}) p(1-p)} & \text{if $p < 1 - \epsilon(\tau)^{2}$,} \\
        1 & \text{if $p \geq 1 - \epsilon(\tau)^{2}$.}
    \end{cases}
\end{equation}
Alternatively, with $p$ in terms of $q$,
\begin{equation}
\label{eq:p_of_q_inequality}
    p \geq
    \begin{cases}
        0 & \text{if $q \leq \epsilon(\tau)^{2}$,} \\
        q + \epsilon^{2} (1-2q) - 2 \sqrt{\epsilon^{2} (1 - \epsilon^{2}) q(1-q)} & \text{if $q > \epsilon(\tau)^{2}$.}
    \end{cases}
\end{equation}

We now conjecture a number of possibilities for the error function $\epsilon$ under our proposed schedule \cref{eq:optimal_schedule} and examine the resulting relationships between $p$ and $\tau$.
Note that an error function that takes values at least that of another error function is weaker in the sense that if the adiabatic theorem holds for the latter function, then it also holds for the former.
As we are mainly interested in the behavior of the system near the beginning of its evolution, it will suffice to take examples of $\epsilon$ that are increasing functions.
Hence, we can replace $\epsilon(\tau)$ with $\epsilon(q)$ in \cref{eq:adiabatic_error,eq:overlap_in_pq,eq:q_of_p_inequality,eq:p_of_q_inequality} since the fact that $\tau \leq q$ of \cref{eq:tau_of_q} implies $\epsilon(\tau) \leq \epsilon(q)$.

\subsubsection{Constant error}
\label{sec:const_err}

This is the weakest possible choice of error function and is defined by
\begin{equation}
\label{eq:error_func_const}
    \epsilon(\tau) = \varepsilon .
\end{equation}
The adiabatic theorem (\cref{thm:adiabatic_theorem}) is expected to hold with this function in most cases (aside from those with pathological choices of Hamiltonians).
Solving $q \leq \epsilon(\tau)^{2} = \varepsilon^{2}$ using \cref{eq:q_of_tau_proposed} gives $0 \leq \tau \leq \varepsilon^{2} - \sqrt{\frac{\varepsilon^{2} (1 - \varepsilon^{2})}{N-1}}$ if $N \geq 1 / \varepsilon^{2}$ and no solutions for $\tau$ otherwise.
Thus, for all $\tau \in [0,1]$,
\begin{equation}
\label{eq:p_of_tau_exact_const_err}
    p \geq
    \begin{cases}
        0 & \text{if $\tau \leq \varepsilon^{2} - \sqrt{\frac{\varepsilon^{2} (1 - \varepsilon^{2})}{N-1}}$,} \\
        q + \varepsilon^{2} (1-2q) - 2 \sqrt{\varepsilon^{2} (1 - \varepsilon^{2}) q(1-q)} & \text{otherwise,}
    \end{cases}
\end{equation}
where in the second case $q$ stands for that given by \cref{eq:q_of_tau_proposed} in terms of $\tau$.
Example plots for \cref{eq:p_of_tau_exact_const_err} are shown in \cref{fig:p_tau_const}.
We can loosely bound this by a linear function independent of $N$ (see \cref{sec:linear_bound}) by
\begin{equation}
\label{eq:p_of_tau_loose_const_err}
    p \geq
    \begin{cases}
        0 & \text{if $\tau \leq \varepsilon$,} \\
        \tau - \varepsilon & \text{otherwise.}
    \end{cases}
\end{equation}
Using \cref{eq:final_time} this implies
\begin{equation}
\label{eq:time_bound_const_err}
    t \leq 2 \sqrt{N-1} \mathopen{}\left( 1 + \frac{p}{\varepsilon} \right)\mathclose{} .
\end{equation}

\begin{figure}
    \centering





    \includegraphics{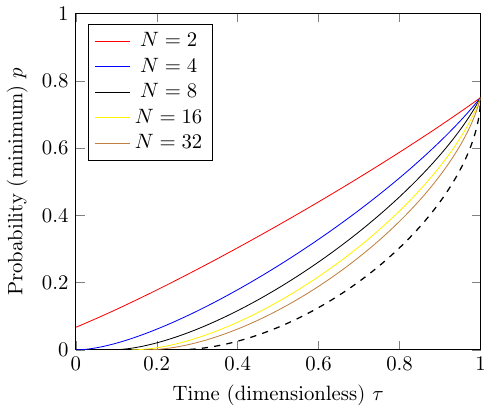}
    \caption{
        Constant error function \cref{eq:error_func_const}.
        Lower bound on the physical probability $p$ of measuring the marked state if adiabatic evolution is stopped at dimensionless time $\tau$.
        Plotted is \cref{eq:p_of_tau_exact_const_err} for $\varepsilon = 0.5$ and different search domain sizes $N$.
        The dashed line shows the asymptotic behavior as $N \to \infty$.
        In this limit of large $N$, the probability $p$ is exactly $0$ for all $\tau \leq \varepsilon^{2}$.
        In fact, it is not possible to bound $p$ below by a nonzero function proportional to $\tau$ unless $N < 1 / \varepsilon^{2}$, which is the case only for $N=2$ in the examples shown (with $\varepsilon = 0.5$).
    }
    \label{fig:p_tau_const}
\end{figure}

\subsubsection{Square-root error}
\label{sec:sqrt_err}

We define this error function by
\begin{equation}
\label{eq:error_func_sqrt}
    \epsilon(\tau) =
    \begin{cases}
        \sqrt{\tau} & \text{if $0 \leq \tau \leq \varepsilon^{2}$,} \\
        \varepsilon & \text{otherwise.}
    \end{cases}
\end{equation}
Since the second case (for $\tau > \varepsilon^{2}$) is simply that of constant error already examined in \cref{sec:const_err}, let us first focus on the first case, where $\epsilon(\tau) = \sqrt{\tau}$ for $\tau \leq \varepsilon^{2}$.
Due to \cref{eq:q_of_tau_proposed}, we know that $q \leq \tau = \epsilon(\tau)^{2}$ has no solution on $\tau \leq \varepsilon^{2} < 1$.
On the other hand, for $\tau > \varepsilon^{2}$ we are in the constant error function regime of \cref{sec:const_err}, and so overall \cref{eq:p_of_q_inequality} becomes
\begin{equation}
\label{eq:p_of_tau_exact_sqrt_err}
    p \geq
    \begin{cases}
        q + \tau (1-2q) - 2 \sqrt{\tau (1 - \tau) q(1-q)} & \text{if $\tau \leq \varepsilon^{2}$,} \\
        q + \varepsilon^{2} (1-2q) - 2 \sqrt{\varepsilon^{2} (1 - \varepsilon^{2}) q(1-q)} & \text{otherwise.}
    \end{cases}
\end{equation}
\Cref{eq:p_of_tau_exact_sqrt_err} is shown in \cref{fig:p_tau_sqrt}.
As $N \to \infty$, this bound becomes that of the constant error function \cref{eq:p_of_tau_exact_const_err} already discussed.

\begin{figure}
    \centering
        





    \includegraphics{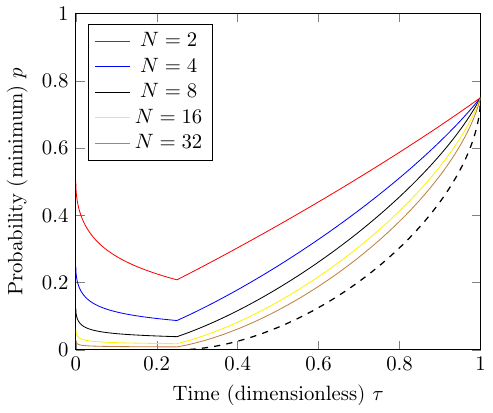}
    \caption{
        Square-root error function \cref{eq:error_func_sqrt}.
        Lower bound on the physical probability $p$ of measuring the marked state if adiabatic evolution is stopped at dimensionless time $\tau$.
        Plotted is \cref{eq:p_of_tau_exact_sqrt_err} for $\varepsilon = 0.5$ and different search domain sizes $N$.
        The dashed line shows the asymptotic behavior as $N \to \infty$.
        In this limit of large $N$, at $\tau \to 0$ the bound converges to that of the constant error function \cref{eq:p_of_tau_exact_const_err}.
    }
    \label{fig:p_tau_sqrt}
\end{figure}

\subsubsection{Sine--square-root error}
\label{sec:sinsqrt_err}

We define this error function by
\begin{equation}
\label{eq:error_func_sinsqrt}
    \epsilon(\tau) =
    \begin{cases}
        \varepsilon \sin \mathopen{}\left( \frac{\sqrt{\tau}}{\varepsilon} \right)\mathclose{} & \text{if $0 \leq \tau \leq \frac{\pi^{2}}{4} \varepsilon^{2}$,} \\
        \varepsilon & \text{otherwise.}
    \end{cases}
\end{equation}
For this function, the condition for the first case of \cref{eq:p_of_q_inequality} $q \leq \epsilon(\tau)^{2}$ has no solutions for $\tau$ (unless $N \to \infty$ in which case $\tau = 0$ is the only solution).
Thus, we only need to consider the second case, and so
\begin{equation}
\label{eq:p_of_tau_exact_sinsqrt_err}
    p \geq
    \begin{cases}
        q + \varepsilon^{2} \sin^{2} \mathopen{}\left( \frac{\sqrt{\tau}}{\varepsilon} \right)\mathclose{} (1-2q) - 2 \sqrt{\varepsilon^{2} \sin^{2} \mathopen{}\left( \frac{\sqrt{\tau}}{\varepsilon} \right)\mathclose{} \mathopen{}\left[ 1 - \varepsilon^{2} \sin^{2} \mathopen{}\left( \frac{\sqrt{\tau}}{\varepsilon} \right)\mathclose{} \right]\mathclose{} q(1-q)} & \text{if $\tau \leq \frac{\pi^{2}}{4} \varepsilon^{2}$,} \\
        q + \varepsilon^{2} (1-2q) - 2 \sqrt{\varepsilon^{2} (1 - \varepsilon^{2}) q(1-q)} & \text{otherwise,}
    \end{cases}
\end{equation}
where again $q$ stands for that given by \cref{eq:q_of_tau_proposed} in terms of $\tau$.
Examples of \cref{eq:p_of_tau_exact_sinsqrt_err} are shown in \cref{fig:p_tau_sinsqrt}.
Note that for very small $\tau \ll \varepsilon^{2}$, this error function \cref{eq:error_func_sinsqrt} is $\epsilon(\tau) \approx \sqrt{\tau}$, i.e. the square-root error function \cref{eq:error_func_sqrt} already discussed.

Under this error function \cref{eq:p_of_tau_exact_sinsqrt_err} implies that $p \geq \frac{1}{4N}$ for all $\tau \in [0,1]$.
We can form a loose bound by combining it with the linear bound \cref{eq:p_of_tau_loose_const_err} derived for the generic constant error function.
This yields a linear bound that is nowhere zero
\begin{equation}
    p \geq
    \begin{cases}
        \frac{1}{4N} & \text{if $\tau \leq \varepsilon + \frac{1}{4N}$,} \\
        \tau - \varepsilon & \text{otherwise.}
    \end{cases}
\end{equation}

\begin{figure}
    \centering





    \includegraphics{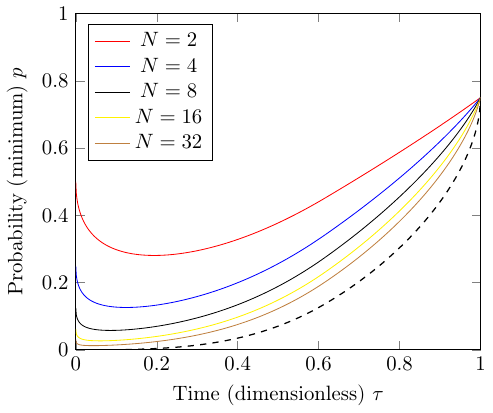}
    \caption{
        Sine--square-root error function \cref{eq:error_func_sinsqrt}.
        Lower bound on the physical probability $p$ of measuring the marked state if adiabatic evolution is stopped at dimensionless time $\tau$.
        Plotted is \cref{eq:p_of_tau_exact_sinsqrt_err} for $\varepsilon = 0.5$ and different search domain sizes $N$.
        The dashed line shows the asymptotic behavior as $N \to \infty$.
        In this limit of large $N$, at $\tau \to 0$ the probability $p$ vanishes with vanishing gradient.
        It is not possible to bound $p$ below by a nonzero function proportional to $\tau$ as $N \to \infty$.
    }
    \label{fig:p_tau_sinsqrt}
\end{figure}

\subsubsection{Scaled square-root error}
\label{sec:sqrt_scaled_err}

This is the strongest potential error function we consider, and is defined by
\begin{equation}
\label{eq:error_func_sqrt_scaled}
    \epsilon(\tau) = \varepsilon \sqrt{\tau} .
\end{equation}
The condition $q \leq \epsilon(\tau)^{2}$ in \cref{eq:p_of_q_inequality} has no solutions for $\tau$ (unless $N \to \infty$ in which case $\tau = 0$ is the only solution).
Inserting \cref{eq:error_func_sqrt_scaled} into \cref{eq:p_of_q_inequality} thus gives
\begin{equation}
\label{eq:p_of_tau_exact_sqrt_scaled_err}
    p \geq q + \varepsilon^{2} \tau (1-2q) - 2 \varepsilon \sqrt{\tau (1 - \varepsilon^{2} \tau) q(1-q)} ,
\end{equation}
where again $q$ stands for that given by \cref{eq:q_of_tau_proposed} in terms of $\tau$.
Example plots for \cref{eq:p_of_tau_exact_sqrt_scaled_err} are shown in \cref{fig:p_tau_sqrt_scaled}.

\begin{figure}
    \centering






    \includegraphics{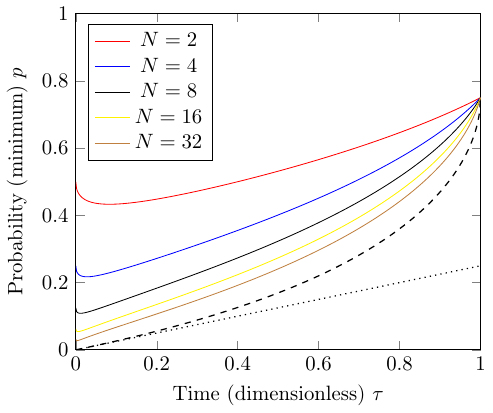}
    \caption{
        Scaled square-root error function \cref{eq:error_func_sqrt_scaled}.
        Lower bound on the physical probability $p$ of measuring the marked state if adiabatic evolution is stopped at dimensionless time $\tau$.
        Plotted is \cref{eq:p_of_tau_exact_sqrt_scaled_err} for $\varepsilon = 0.5$ and different search domain sizes $N$.
        The dashed line shows the asymptotic behavior as $N \to \infty$.
        For all $N$, the probability is bounded below by $p \geq (1 - \varepsilon)^{2} \tau$ as shown by the dotted line.
    }
    \label{fig:p_tau_sqrt_scaled}
\end{figure}

As noted earlier, since $\tau \leq q$ thanks to \cref{eq:tau_of_q} and that $\epsilon$ is increasing, we can replace $\tau$ with $q$ for a weaker bound and so
\begin{equation}
\begin{split}
    p
    & \geq q + \varepsilon^{2} q (1 - 2q) - 2 \varepsilon q \sqrt{(1 - \varepsilon^{2} q)(1-q)} \\
    & = (1 - \varepsilon)^{2} q + 2 \varepsilon q \mathopen{}\left[ 1 - \varepsilon q - \sqrt{(1 - \varepsilon^{2} q)(1-q)} \right]\mathclose{} .
\end{split}
\end{equation}
Notice that
\begin{equation}
\begin{split}
    \sqrt{(1 - \varepsilon^{2} q)(1-q)}
    & = \sqrt{1 + \varepsilon^{2} q^{2} - (1 + \varepsilon^{2})q} \\
    & \leq \sqrt{1 + \varepsilon^{2} q^{2} - 2 \varepsilon q} \\
    & = 1 - \varepsilon q .
\end{split}
\end{equation}
and therefore that we have
\begin{equation}
\label{eq:p_prop_q_sqrt_scaled}
    p \geq (1 - \varepsilon)^{2} q .
\end{equation}
Again since $\tau \leq q$, this becomes
\begin{equation}
    p \geq (1 - \varepsilon)^{2} \tau .
\end{equation}
Assuming that $p \leq (1 - \varepsilon)^{2}$, we can also rearrange this for $\tau$ giving
\begin{equation}
    \tau \leq \frac{p}{(1 - \varepsilon)^{2}} .
\end{equation}
By \cref{eq:final_time}, to achieve marked probability at least $p$, a sufficient physical time for evolution would be
\begin{equation}
\label{eq:time_bound_sqrt_scaled_err}
    t = T \tau \leq \frac{2 \sqrt{N-1}}{\varepsilon (1 - \varepsilon)^{2}} p
\end{equation}
under the square-root error function.

\section{Exact adiabaticity and probability}
\label{sec:exact_adiabaticity}

By solving the Schr{\"o}dinger equation of \cref{eq:schrodinger_dimensionless}, we can to find the exact physical probability of measuring the marked state and the error,
\begin{align}
    p(\tau) & = \lvert \braket{m}{\psi(\tau)} \rvert^{2} , \\
    \epsilon(\tau) & = \sqrt{1 - \lvert \braket{\varepsilon_{0}(\tau)}{\psi(\tau)} \rvert^{2}} ,
\end{align}
respectively.
For both quantities simultaneously, the Schr{\"o}dinger equation can be reduced to a two-dimensional homogeneous linear system of ordinary differential equations (see \cref{sec:numerical_schrodinger} for details).
The system to be solved is
\begin{equation}
\label{eq:ode_equations}
    \chi^{\prime}(\tau) = -i T
    \begin{bmatrix}
         1 - s(\tau) & - \frac{1-s(\tau)}{N} \\
        -s(\tau) & s(\tau)
    \end{bmatrix}
    \chi(\tau)
\end{equation}
with initial conditions
\begin{equation}
\label{eq:ode_initialconds}
    \chi(0) =
    \begin{bmatrix}
        N^{- \frac{1}{2}} \\
        N^{\frac{1}{2}}
    \end{bmatrix} .
\end{equation}
Here, $\chi(\tau)$ is a two-dimensional column vector and $s$ is the schedule to be used (for example, our proposed schedule given by \cref{eq:optimal_schedule}).
The desired quantities at each time $\tau$ are then calculated from the components of the solution $\chi(\tau)$.
We approach the task of solving this initial value problem numerically, using an explicit Runge--Kutta method of order $5(4)$ \cite{kutta1901beitrag,dormand1980family,shampine1986some}.
Specifically, we make use of the method implemented in the SciPy Python library with greater than default precision (\texttt{atol} and \texttt{rtol} both set to \num{1e-12}) \cite{virtanen2020scipy}.
This method chooses an adaptive step size (that may differ at different points of the domain) based on the specified error tolerances.
\Cref{fig:error_numerical} shows the exact (numerically obtained) error at all times.
Similarly, \cref{fig:p_numerical} shows the exact probability of measuring the marked state using our schedule.

\begin{figure}
    \centering
    \begin{subfigure}{0.5\linewidth}
        \centering
        \includegraphics[height=5.5cm]{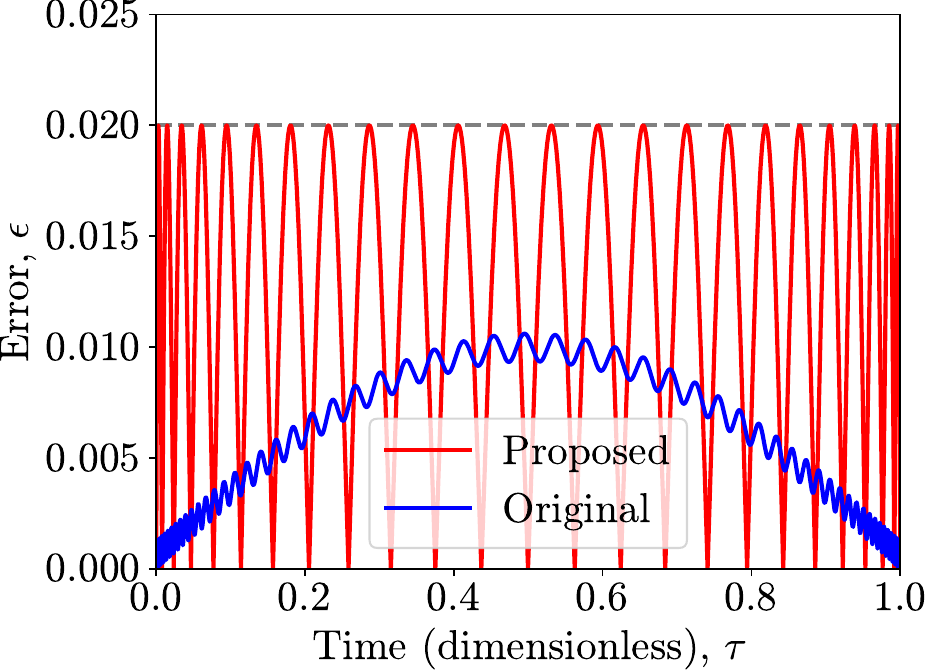}
        \caption{}
        \label{fig:error_numerical_full}
    \end{subfigure}\hfill
    \begin{subfigure}{0.5\linewidth}
        \centering
        \includegraphics[height=5.39cm]{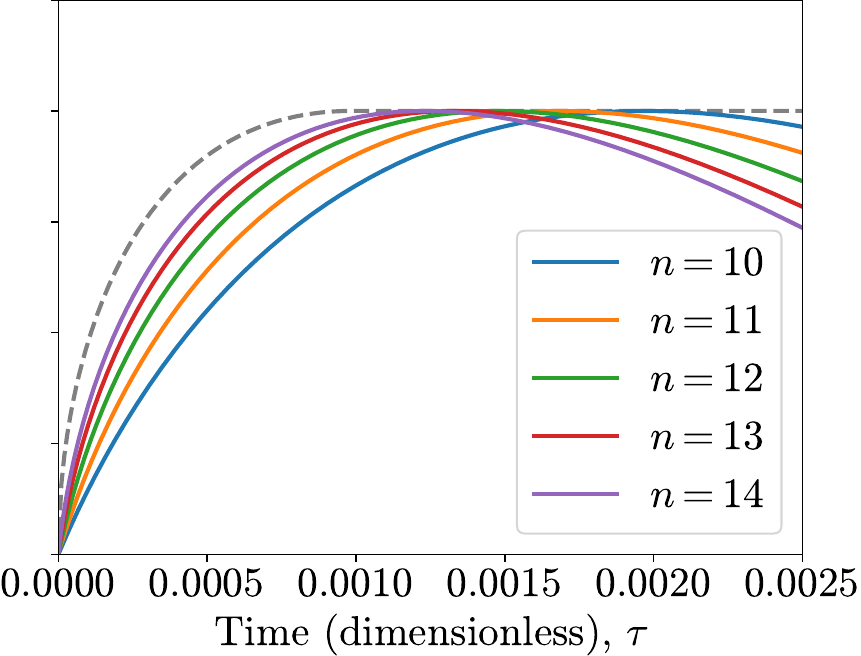}
        \caption{}
        \label{fig:error_numerical_zoomed}
    \end{subfigure}
    \caption{
        \textbf{Left~(a):}
        Exact (numerical) error $\epsilon$ at all dimensionless times $\tau$ using both the original schedule of \textcite{roland2002quantum} (blue) and our proposed schedule of \cref{eq:optimal_schedule} (red).
        For this example, parameters were set to $n = 8$ (so that $N = 256$) and $\varepsilon = 0.02$.
        That the error is bounded by the horizontal gray dashed line at $0.02$ confirms adiabaticity for both schedules as predicted by the adiabatic theorem of \cref{thm:adiabatic_theorem} with error function $\epsilon(\tau) = \varepsilon$.
        \textbf{Right~(b):}
        The same plot but only for the proposed schedule \cref{eq:optimal_schedule} with $10 \leq n \leq 14$, and showing only small $\tau$.
        Here, the gray dashed line is the conjectured sine--square-root error function \cref{eq:error_func_sinsqrt}, to which the error appears to converge to from below as $N \to \infty$.
    }
    \label{fig:error_numerical}
\end{figure}

We observe from \cref{fig:error_numerical_full} the overall form of the exact error and numerical evidence that the adiabatic theorem \cref{thm:adiabatic_theorem} holds for our proposed schedule with an error function $\epsilon(\tau) = \varepsilon$.
We see that the exact error quantity oscillates with constant peak amplitudes and with increasing frequency away from $\tau = 1/2$.
In \cref{fig:error_numerical_zoomed} we increase the scale to show only the first peaks of the error.
We see that the error for this initial peak seems to be asymptotically bounded above for all search domain sizes $N$ by the sine--square-root error function \cref{eq:error_func_sinsqrt} that we considered in \cref{sec:sinsqrt_err}.
However, the error is not bounded by the scaled square-root error function of \cref{sec:sqrt_scaled_err}, which would give rise to the physical time required for evolution scaling proportional to desired probability $p$.
\Cref{thm:ideal_parallelization} (allowing for complete efficient parallelization of unstructured search using our schedule in the ideal, errorless adiabatic regime) unfortunately, then, does not appear to be physically realizable.
From here on, we shall therefore assume the sine--square-root error function of \cref{sec:sinsqrt_err} (or the weaker error functions of \cref{sec:const_err,sec:sqrt_err}) in non-numerical analysis.

\begin{figure}
    \centering
    \begin{subfigure}{0.5\linewidth}
        \centering
        \includegraphics[height=5.5cm]{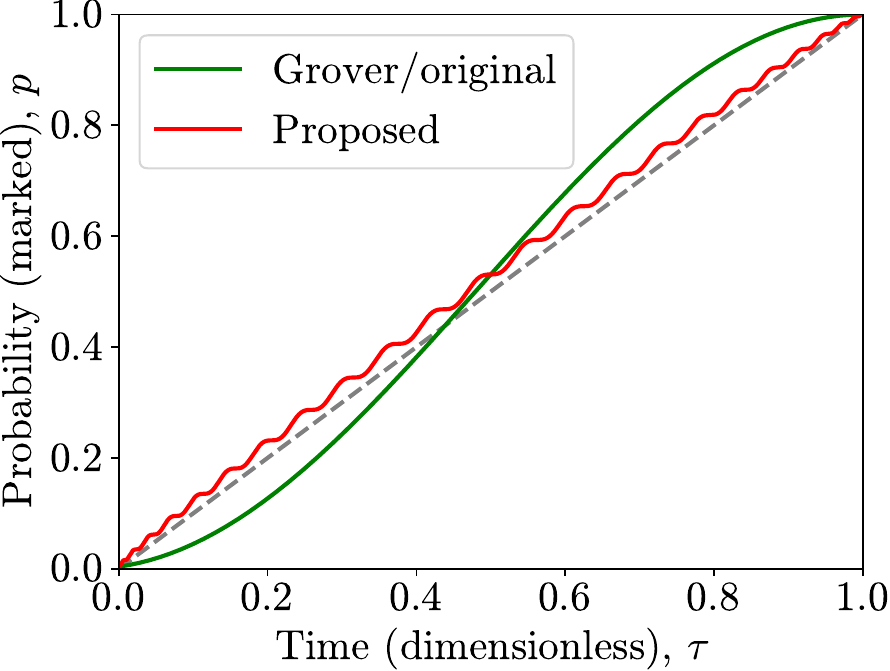}
        \caption{}
        \label{fig:p_numerical_full}
    \end{subfigure}\hfill
    \begin{subfigure}{0.5\linewidth}
        \centering
        \includegraphics[height=5.5cm]{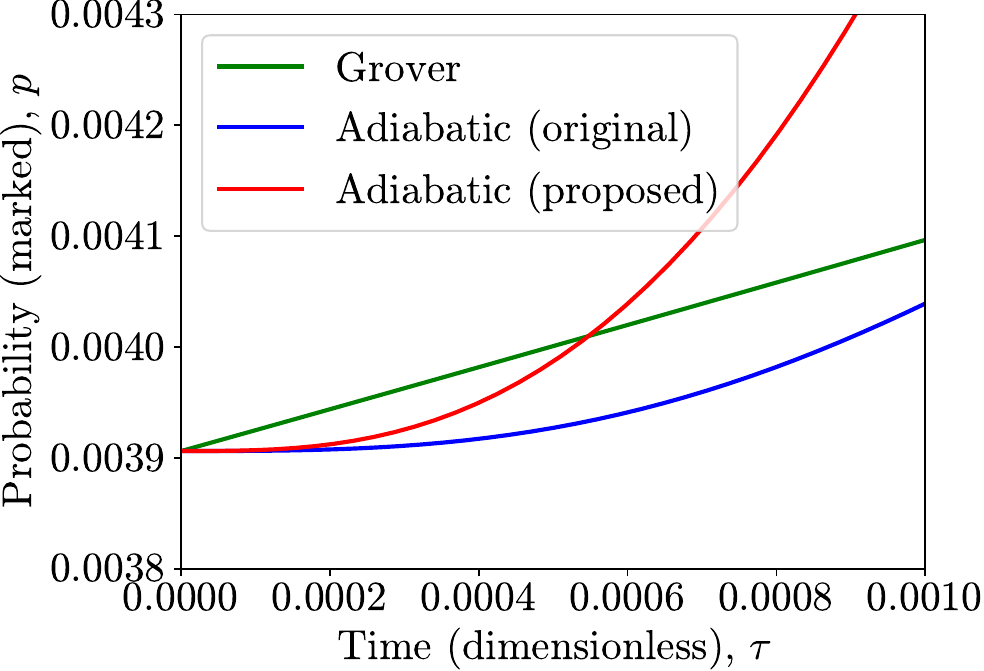}
        \caption{}
        \label{fig:p_numerical_zoomed}
    \end{subfigure}
    \caption{
        \textbf{Left~(a):}
        Exact (numerical) probability for measuring the marked state at all dimensionless times $\tau$.
        Shown are the plots using the proposed schedule \cref{eq:optimal_schedule} (red), the original schedule of \textcite{roland2002quantum}, and a comparison to Grover's algorithm.
        For this example, parameters were set to $N = 256$ and $\varepsilon = 0.02$.
        The curve of the original schedule exhibits some oscillations about that for Grover's algorithm, but these oscillations are too small to be seen in the plot, and thus both have been colored green.
        The gray dashed line is the identity function, to which the proposed (red) curve oscillates about for $N \to \infty$ and further converges to as both $n \to \infty$ and $\varepsilon \to 0$.
        \textbf{Right~(b):}
        The same plot for very small $\tau$.
    }
    \label{fig:p_numerical}
\end{figure}

From \cref{eq:time_bound_const_err} and the fact that we need not evolve the system in time at all to measure the marked state with probability $1/N$ (the initial state is prepared in uniform superposition), the physical time needed to elapse to attain probability $p$ is
\begin{equation}
\label{eq:protocol_prob}
    t \leq
    \begin{cases}
        0 & \text{if $p \leq \frac{1}{N}$,} \\
        2 \sqrt{N-1} \mathopen{}\left( 1 + \frac{p}{\varepsilon} \right)\mathclose{} & \text{otherwise.}
    \end{cases}
\end{equation}
For simplicity, from this rather generic bound, we form \cref{prot:prob_p_measurement}.
The protocol concretely describes a method to successfully identify the marked state with probability $p$ in at most this much time (along with some constant overhead time).

\Cref{fig:p_numerical} shows how the marked state probability evolves over time.
In \cref{fig:p_numerical_full} we see an oscillating probability of measuring the marked state that increases almost linearly as expected (compare with \cref{eq:q_of_tau_proposed}, which is the errorless, and therefore oscillation-free $\varepsilon \to 0$ case).
The probability remains greater than using the original schedule for all times until approximately halfway through the adiabatic evolution.
Note that we also include a comparison to probability using Grover's algorithm in \cref{fig:p_numerical} (and that the original adiabatic schedule follows Grover's algorithm almost exactly).
Let us now discuss this comparison further.

\subsection{Comparison with Grover's algorithm}
\label{sec:exact_grover_comparison}

In both the adiabatic case and that of Grover's algorithm, $\tau$ is a dimensionless quantity representing the fractional completion of the ``evolution'' parts of the algorithms (Hamiltonian interpolation in the adiabatic case and the application of successive iterations in Grover's algorithm).
The marked state probability using Grover's algorithm as a function of $\tau$ is given by (see \cref{sec:grover_probability})
\begin{equation}
\label{eq:grover_prob_tau}
    p_{\text{g}} = \cos^{2} \mathopen{}\left[ (1 - \tau) \arccos{\frac{1}{\sqrt{N}}} \right]\mathclose{} .
\end{equation}
This is a fair comparison if we assume that all other parts of the algorithms take negligible time, and that each Grover iteration takes an identical fixed amount of physical time, independent of $N$.
We can ensure a fair comparison of their physical (truncated) running times by setting their durations equal
\begin{equation}
\label{durations_equal}
    T_{\text{g}} = k T_{\text{a}} ,
\end{equation}
where the subscripts refer to Grover's and adiabatic algorithms, respectively, and $k$ is a constant denoting the number of Grover iterations that are performed per unit of time in the Schr{\"o}dinger equation of \cref{eq:schrodinger}.
Comparing between adiabatic evolution (using our proposed schedule) and Grover's algorithm this means setting
\begin{equation}
\label{eq:grover_diabaticity}
    \varepsilon
    = \frac{8k \sqrt{N_{\text{a}} - 1}}{\frac{\pi}{\arccsc{\sqrt{N_{\text{g}}}}} - 2}
    \approx \frac{8k}{\pi} \sqrt{\frac{N_{\text{a}}}{N_{\text{g}}}}
\end{equation}
in our adiabatic algorithm or, letting $N = N_{\text{a}} = N_{\text{g}}$, it means setting
\begin{equation}
\label{eq:grover_diabaticity_equaldomain}
    \varepsilon = \frac{8k \sqrt{N - 1}}{\frac{\pi}{\arccsc{\sqrt{N}}} - 2} \approx \frac{8k}{\pi} ,
\end{equation}
where the approximations are for large search domain sizes $N_{\text{a}}$ and $N_{\text{g}}$.

As noted for the original schedule, the probability using our adiabatic schedule appears to remain greater than for Grover's algorithm for approximately the first half of evolution.
\Cref{fig:p_numerical_zoomed} exhibits the exception to this, which is at very small dimensionless times, when the strength of our initial oscillations can play a role.
In this region, we observe that for a short initial time the adiabatic algorithms yield a smaller probability before increasing above Grover's algorithm.
Let us denote by $\tau_{\text{cross}}$ the dimensionless time at which our adiabatic schedule crosses into the region in which it outperforms Grover's algorithm in terms of probability (until approximately halfway through the evolution).
That is, $p_{\text{a}} \geq p_{\text{g}}$ whenever $\tau_{\text{cross}} \leq \tau \lesssim 1/2$.
Translating into physical time, letting
\begin{equation}
    t_{\text{cross}} = T \tau_{\text{cross}} = \frac{2}{\varepsilon} \sqrt{N-1} \tau_{\text{cross}}
\end{equation}
according to \cref{eq:final_time}, this occurs whenever
\begin{equation}
    t_{\text{cross}} \leq t \lesssim \frac{\sqrt{N-1}}{\varepsilon} .
\end{equation}
To be clear, here physical time is written with time units as in \cref{eq:schrodinger}, and we have assumed that maximum durations (in these units) of Grover's algorithm and our schedule are equal as in \cref{durations_equal}.

By utilizing the same numerical solving methods used to find the probability $p_{\text{a}}$ of our adiabatic schedule as a function of $\tau$ in \cref{fig:p_numerical} and comparing to the probability $p_{\text{g}}$ of Grover's algorithm stated in \cref{eq:grover_prob_tau}, we can numerically evaluate the dimensionless crossing time $\tau_{\text{cross}}$ for any fixed diabaticity $\varepsilon$ and hence find $t_{\text{cross}}$.
Specifically, we find the largest root of the difference $p_{\text{g}}(\tau) - p_{\text{a}}(\tau)$ where $\tau \ll \frac{1}{2}$.
The results of this are shown in \cref{fig:crossing_time}.
It appears that for large $N \to \infty$ and a constant diabaticity parameter $\varepsilon$ (i.e. fixed by \cref{eq:grover_diabaticity_equaldomain} based on the rate of Grover iterations achievable $k$) the physical crossing time is decreasing in $N$ to some constant
\begin{equation}
    t_{\text{cross}} = \bigO(1).
\end{equation}
In other words, irrespective of the (assumed large enough) search domain size, our adiabatic schedule outperforms Grover's algorithm in terms of probability when the physical time $t$ is truncated between some constant $t_{\text{cross}}$ and approximately half of a full evolution $\frac{\pi}{8k} \sqrt{N}$.

\begin{figure}
    \centering
    \includegraphics[width=0.5\linewidth]{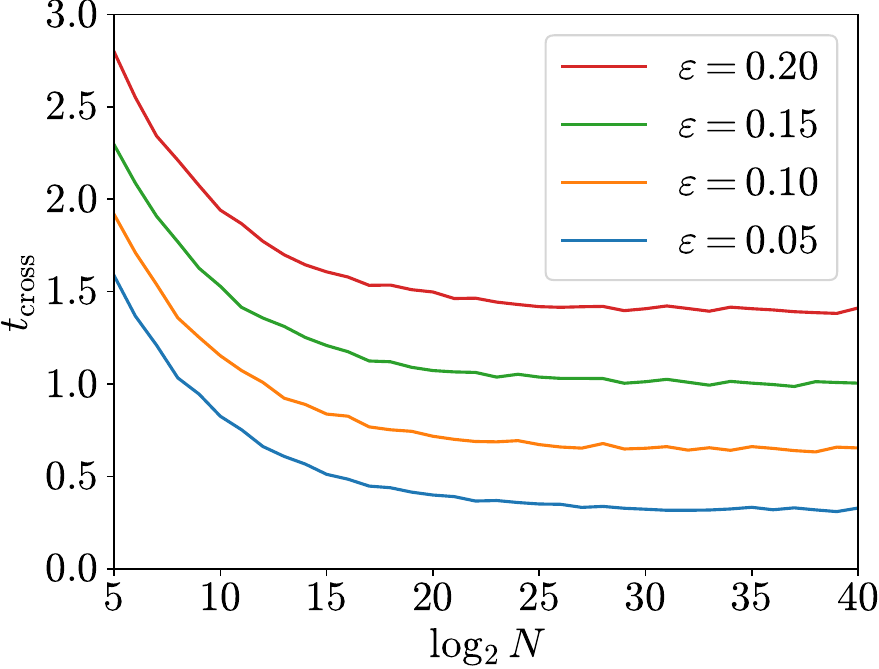}
    \caption{
        Physical time $t_{\text{cross}}$ (in the time units of \cref{eq:schrodinger}) after which the probability of measuring the marked state becomes greater for our schedule than for Grover's algorithm as a function of search domain size $N$ (plotted on a logarithmic scale).
        For fixed diabaticity $\varepsilon$ (which is set by \cref{eq:grover_diabaticity_equaldomain} for fair comparison), this appears to be decreasing and $t_{\text{cross}} = \bigO(1)$ as $N \to \infty$.
        Results were obtained by numerical root finding by comparison of \cref{eq:grover_prob_tau} with $p_{\text{a}}(\tau)$ solved using the numerical techniques discussed in \cref{sec:exact_adiabaticity}.
        That the curves are not quite smooth is likely due to relatively coarse step sizes being chosen by the ODE solver compared to the small dimensionless time scales of roots.
    }
    \label{fig:crossing_time}
\end{figure}

\section{Bounded resources}
\label{sec:bounded_resources}

As we are dealing with incomplete runs of search algorithms, it is natural to ask whether this can be leveraged in cases where the hardware resources to which we have access is in some way bounded.
For example, let us consider that we have access to $S$ independent quantum processors simultaneously at any one moment.
We may demand that any adiabatic evolution may only continue for some maximum physical time $t_{\text{c}} \leq T$ (in the case of Grover's algorithm, we would similarly constrain the maximum depth of the computation).
This may be due, for example, to the quantum hardware having a constrained coherence time.

We would like to know how many independent runs of \cref{prot:prob_p_measurement} are required if the evolution time of the algorithm is constrained to a maximum physical time $t_{\text{c}}$.
That is, how many runs we must perform in order that we fail to measure the marked state in any of the runs with probability at most some small $\alpha > 0$ of our choice.
Then, the marked state would be classically verified with probability at least $\gamma = 1 - \alpha$ upon completion of all the runs.
Running the protocol with the maximum possible choice of probability
\begin{equation}
    p = \mathopen{}\left( \frac{t_{\text{c}}}{2 \sqrt{N-1}} - 1 \right)\mathclose{} \varepsilon
\end{equation}
attains an exactly $t_{\text{c}}$ cutoff time in the protocol.
Note that we require coherence time of at least
\begin{equation}
\label{eq:advantage_coherence_time}
    t_{\text{c}} > 2\sqrt{N-1} \mathopen{}\left( 1 + \frac{1}{\varepsilon N} \right)\mathclose{}
\end{equation}
time units using \cref{prot:prob_p_measurement} to gain an advantage over randomly guessing the marked state, since otherwise the marked state output probability is $p = 1/N$ and the protocol reduces to outputting one of the $N$ possible states uniformly at random after a constant time.
This minimum time can likely be reduced to a constant with tighter consideration of all the bounds (such as by using the sine--square-root error function of \cref{sec:sinsqrt_err} or some stronger function for finite $N$), as \cref{fig:crossing_time} indicates that our schedule yields higher probabilities than Grover's algorithm after constant time depending only on $\varepsilon$.
Since the independent runs form a binomial process with success probability $p$, we expect to obtain the marked state exactly once in every $1/p$ runs (setting aside that this may not be an integer).
In $r = m/p$ runs for any $m \geq p$, the failure probability is $\alpha = (1 - p)^{m/p} \leq \euler^{-m}$.
Hence, performing $r \geq -p^{-1} \ln{\alpha}$ runs guarantees probability at most $\alpha$ of failing to measure the marked state in all of them.
As $r$ must be an integer, we will set
\begin{equation}
    r = \mathopen{}\left\lceil \frac{- \ln{\alpha}}{p} \right\rceil\mathclose{} .
\end{equation}

Given that we have access to at most $S$ independent quantum processors (which we may run in parallel) with coherence time $t_{\text{c}}$ satisfying \cref{eq:advantage_coherence_time}, the overall running time (assuming that the steps of \cref{prot:prob_p_measurement} taking constant time amount to time $c$) is
\begin{equation}
    \mathopen{}\left\lceil \frac{r}{S} \right\rceil\mathclose{} (t_{\text{c}} + c)
    = \mathopen{}\left\lceil \frac{1}{S} \mathopen{}\left\lceil \frac{-2 \varepsilon^{-1} \ln{\alpha}}{t_{\text{c}} (N-1)^{-\frac{1}{2}} - 2} \right\rceil\mathclose{} \right\rceil\mathclose{} (t_{\text{c}} + c) .
\end{equation}
Suppose, for example, we had $t_{\text{c}} = 2(b+1) \sqrt{N}$, where $0 < b \leq \frac{1}{\varepsilon} \sqrt{1 - 1/N} - 1$, so that $p \approx b \varepsilon$ for large enough $N$.
Then the running time for large $N$ becomes
\begin{equation}
\label{eq:proposed_overall_runtime}
    2(b+1) \mathopen{}\left\lceil \frac{1}{S} \mathopen{}\left\lceil \frac{- \ln{\alpha}}{b \varepsilon} \right\rceil\mathclose{} \right\rceil\mathclose{} \sqrt{N}
    = \bigO(\sqrt{N}) ,
\end{equation}
with any additional processors $S$ giving rise to a speedup by a constant factor.
Otherwise, for smaller $t_{\text{c}}$ than \cref{eq:advantage_coherence_time} such that $p = 1/N$, the overall running time is
\begin{equation}
    \mathopen{}\left\lceil \frac{r}{S} \right\rceil\mathclose{} c
    = \mathopen{}\left\lceil \frac{\lceil -N \ln{\alpha} \rceil}{S} \right\rceil\mathclose{} c
    = \bigO \mathopen{}\left( \mathopen{}\left\lceil \frac{N}{S} \right\rceil\mathclose{} \right)\mathclose{} .
\end{equation}
In this regime there is no quantum advantage, although classical parallelization is exhibited.

\subsection{Grover's algorithm with bounded depth}
\label{sec:bounded_depth_grover}

A bounded coherence time of $t_{\text{c}}$ corresponds in Grover's algorithm to a maximum of $kt_{\text{c}}$ iterations, where $k$ is the number of Grover iterations achievable per unit of time in the Schr{\"o}dinger equation of \cref{eq:schrodinger}.
The probability of the marked state at this time is then
\begin{equation}
    p = \cos^{2} \mathopen{}\left[ \mathopen{}\left( 1 - \frac{2kt_{\text{c}}}{\frac{\pi}{2 \arccsc{\sqrt{N}}} - 1} \right)\mathclose{} \arccos{\frac{1}{\sqrt{N}}} \right]\mathclose{}
    \approx \sin^{2} \mathopen{}\left( \frac{2kt_{\text{c}}}{\sqrt{N}} \right)\mathclose{} ,
\end{equation}
where the approximation is for large $N$.
If $t_{c}$ were to be constant in $N$, then for large $N$ the probability would become
\begin{equation}
    p \approx \frac{4 k^{2} t_{\text{c}}^{2}}{N} = \bigO \mathopen{}\left( \frac{1}{N} \right)\mathclose{} .
\end{equation}
Thus, the overall running time for a constant constrained coherence time using Grover's algorithm would be $\bigO(\lceil N/S \rceil)$, offering no quantum advantage.
As mentioned, we expect based on \cref{fig:p_tau_sinsqrt,fig:crossing_time} that for our proposed adiabatic schedule $t_{c}$ can be made constant while still maintaining quantum advantage as in \cref{eq:proposed_overall_runtime}.

\section{Discussion}
\label{sec:discussion}

In this work, we introduced an alternative time schedule to interpolate between initial and final Hamiltonians for the quantum unstructured search problem using an adiabatic quantum computer.
We examined some resulting properties in comparison to existing algorithms (in particular, adiabatic unstructured search using the original schedule of \textcite{roland2002quantum}, as well as Grover's algorithm \cite{grover1996fast,boyer1998tight}).
Comparatively, the schedule given in \cref{eq:optimal_schedule} and shown in \cref{fig:schedule_comparison}, allows us to move even faster at the beginning and end times of the interpolation, leaving additional time to evolve slowly at the most important middle times where the energy gap is small.
This trading of speed between the middle and end times maintains the expected quadratic speedup of unstructured search over classical algorithms.

We were particularly interested in whether this schedule would lead to interesting behavior at intermediate times, which could be exploited by canceling the evolution of the algorithm early, accepting a reduced probability of successfully identifying the target state upon measuring the system at this time.
A distinguishing property of this schedule arises first in the ``errorless'' ideal adiabatic limit, where the state of the system perfectly matches the instantaneous ground state of the Hamiltonian over all times (a regime that is physically unrealizable, as it requires infinite time to complete such an evolution).
In this limit, we showed in \cref{eq:q_of_tau_proposed,eq:tau_of_q} that the probability grows in direct proportion with the dimensionless time $\tau$ (a quantity proportional to real time which can be thought of as the fractional completion of the interpolation) for all times.
This is not the case for the original adiabatic schedule or for Grover's algorithm (which behave identically to each other as shown by \cref{thm:grover_equiv}), and can be seen in \cref{fig:q_tau}.
If such probabilities could be achieved in finite physical time, it would immediately imply \cref{thm:ideal_parallelization} that time could be perfectly traded for space in unstructured search, e.g. that $\sqrt{N}$ adiabatic quantum machines running in parallel to perform unstructured search in a constant time would be a physically realizable feat.

As the adiabatic theorem (\cref{thm:adiabatic_theorem}) only guarantees that the system remains \emph{close} to the instantaneous ground state over the finite course of a real algorithm, we examined generically in \cref{sec:approx_adiabatic} the extent to which this state deviation is reflected in the probabilities in the worst case.
Of the several progressively stronger candidate upper bounds on the distance (error functions) we conjectured, we showed (in \cref{sec:const_err}) that the strongest of these---in which the fidelity $\lvert \braket{\varepsilon_{0}(\tau)}{\psi(\tau)} \rvert^{2} \geq 1 - \varepsilon^{2} \tau$ is allowed to decrease linearly with time---would yield perfect parallelization despite being nonzero.
In \cref{sec:exact_adiabaticity}, we derived equations for the exact probabilities and errors over time from the Schr{\"o}dinger equation.
Our numerical results (\cref{fig:error_numerical}) showed that indeed the schedule exhibits adiabaticity in the sense of \cref{thm:adiabatic_theorem} for the parameters tested (something that cannot strictly be assumed from satisfying the adiabatic condition \cref{eq:adiabatic_cond} alone due to the existence of certain pathological counterexamples and inconsistencies in the adiabatic theorem \cite{marzlin2004inconsistency,tong2005quantitative,du2008experimental,wu2008adiabatic}).
Unfortunately, it appears that perfect parallelization does not translate into practice, as the numerically evaluated errors asymptotically exceeded those of our parallelizable error function as $\tau \to 0$.
The tightest error function we considered still bounding the true errors was the ``sine--square-root'' function of \cref{sec:sinsqrt_err}, which does not admit perfect parallelizability.

Nonetheless, we numerically demonstrated that using our schedule outperforms the original schedule in terms of probability during all times earlier than approximately halfway through a full evolution (\cref{fig:p_numerical}).
The same can be said when compared to Grover's algorithm, with the exception of at very small times (\cref{fig:p_numerical_zoomed}).
We showed (\cref{fig:crossing_time}) that the (physical) time at which our adiabatic schedule begins to outperform Grover's algorithm is $\bigO(1)$ as $N \to \infty$.
In other words, after some short fixed time that is constant in $N$ (and seemingly approximately proportional to the number of Grover iterations achievable per unit of time), our schedule is advantageous over Grover's algorithm (and always over using the original adiabatic schedule) in applications of unstructured search that expect to have their evolution terminated early before around half the completion time, and otherwise run to completion.
One could envisage such scenarios in cryptographic applications, such as where there are races between parties to find elements of preimages of cryptographic hash functions.

Finally, based on our findings about the adiabaticity of interpolation under our schedule, we produced \cref{prot:prob_p_measurement} which guarantees successfully obtaining the marked state with probability at least $p$ within running time (including overhead steps amounting to a constant time $c$) of $2 \sqrt{N}(1 + p / \varepsilon) + c$.
With this protocol, we derived overall running times of our algorithm for resource-limited scenarios in which at most $S$ quantum processors may be working simultaneously in parallel, each of which has a bounded coherence time (\cref{sec:bounded_resources}).

\paragraph{Future work}

In \cref{prot:prob_p_measurement}, there exists a minimum time before which a measurement would give no better than the probability of a uniformly random guess $1/N$ to successfully obtain the marked state.
As it stands, this minimum time \cref{eq:advantage_coherence_time} increases as $\bigO(\sqrt{N})$, which is undesirable in the context of running the protocol with hardware that has constrained coherence time.
It is very reasonable to assume that this scaling in $N$ can be reduced or even removed with more careful consideration of the approximations used to arrive at \cref{eq:protocol_prob}, such as by bounding \cref{eq:p_of_tau_exact_sinsqrt_err} below by $1/N$ or by using a tighter error function that depends on $N$.
This is because it is indicated in \cref{fig:crossing_time} that our schedule yields greater probabilities than Grover's algorithm (which itself immediately improves upon uniform probability when $t > 0$) beginning at a constant time.
Moreover, the dimensionless time $\tau_{\text{min}} > 0$ at which the valid (according to \cref{fig:p_numerical_zoomed}) ``sine--square-root'' error function of \cref{sec:sinsqrt_err} gives probability exceeding uniform probability $p \geq 1/N$ appears in \cref{fig:p_tau_sinsqrt} to decrease to $\tau_{\text{min}} \to 0$ as $N \to \infty$.
We would then achieve quantum advantage for constant constrained coherence times, which as shown in \cref{sec:bounded_depth_grover} is not possible with Grover's algorithm.

While the numerical solution to the error showed in \cref{fig:error_numerical_zoomed} indicates that the exact error increases to the error function of \cref{eq:error_func_sinsqrt} as $N \to \infty$, this evidence does not constitute a mathematical proof that perfect parallelization is impossible.
While this is known for Grover's algorithm \cite{zalka1999grover}, exhibiting a proof of this specifically for adiabatic quantum computers could be illuminating.
One approach could be to attempt to either analytically solve or extract exact qualitative properties (at least at small times) about the errors or probabilities from the initial value problem governing the system \cref{eq:ode_equations,eq:ode_initialconds}, e.g., from its Magnus expansion \cite{magnus1954exponential}.
This might also resolve the problem of whether or not all troughs in the exact error oscillations for our schedule fall to exactly zero.
In our simulations this is not the case: errors for successive troughs slowly rise, though by a magnitude so small as to be imperceptible in \cref{fig:error_numerical_full}.
We are unaware of whether this is simply an artifact of finite numerical precision.

Another interesting direction entirely absent from the present work is to use different interpolation paths to reach the final target Hamiltonian.
Alternative choices could be to use other initial Hamiltonians, such as a transverse field driver Hamiltonian $H_{0} = \sum_{j=1}^{n} (1+\sigma_{\text{x}}^{(j)})/2$ \cite{morley2019quantum}.
More generally, one could take any one-dimensional path through the space of all possible Hamiltonians (not necessarily parametrized at unit speed), ending at $H_{1}$ and beginning at some $H_{0}$ whose description is independent of which state is marked \cite{rezakhani2010accuracy}.
These paths evade the analysis presented here as they may have entirely different energy spectra and eigenstates.

\section*{Acknowledgments}

S.A.A. and P.W. acknowledge funding from STFC with grant number ST/W006537/1.
P.W. also acknowledges partial support from EPSRC grants EP/T001062/1, EP/X026167/1, EP/Z53318X/1, and EP/T026715/1.

\appendix

\section{Marked probability in Grover's algorithm}
\label{sec:grover_probability}

In Grover's algorithm (assuming that $N$ is large enough), after $t$ time steps (Grover iterations) have elapsed the marked state has amplitude
\begin{equation}
    \alpha = \sin[(2t+1) \theta] ,
\end{equation}
where
\begin{equation}
    \theta = \arcsin{\frac{1}{\sqrt{N}}} .
\end{equation}
Combining these and writing the probability $q = \lvert \alpha \rvert^{2}$ of obtaining the marked state upon measurement gives
\begin{equation}
\label{eq:grover_q_of_t}
    q = \sin^{2} \mathopen{}\left[ (2t+1) \arcsin{\frac{1}{\sqrt{N}}} \right]\mathclose{} .
\end{equation}
The first solution to $q=1$ (the number of time steps after which we first obtain the marked state with certainty) occurs when time $t$ takes the value
\begin{equation}
    T = \frac{\pi}{4 \arcsin{\frac{1}{\sqrt{N}}}} - \frac{1}{2} .
\end{equation}
Reparametrizing \cref{eq:grover_q_of_t} in terms of dimensionless time defined by $\tau = t / T$ and simplifying the resulting expression yields
\begin{equation}
    q = \cos^{2} \mathopen{}\left[ (1 - \tau) \arccos{\frac{1}{\sqrt{N}}} \right]\mathclose{} .
\end{equation}
Assuming $\tau \in [0,1]$, this can also be inverted on $q \in \left[ \frac{1}{N}, 1 \right]$ as
\begin{equation}
    \tau = 1 - \frac{\arccos{\sqrt{q}}}{\arccos{\frac{1}{\sqrt{N}}}} .
\end{equation}
Finally, note that as $N \to \infty$ we get
\begin{subequations}
\begin{align}
    q & \to \sin^{2} \mathopen{}\left( \frac{\pi}{2} \tau \right)\mathclose{} , \\
    \tau & \to \frac{2}{\pi} \arcsin{\sqrt{q}} .
\end{align}
\end{subequations}

\section{Adiabatic condition transition matrix element}
\label{sec:adiabatic_numerator}

The linear interpolating Hamiltonian
\begin{equation}
\begin{split}
    H(s)
    & = (1-s) H_{0} + s H_{1} \\
    & = (1-s) (I - \ketbra{\phi}) + s (I - \ketbra{m})
\end{split}
\end{equation}
has derivative
\begin{equation}
\begin{split}
    H^{\prime}(s)
    & = H_{1} - H_{0} \\
    & = \ketbra{\phi} - \ketbra{m} .
\end{split}
\end{equation}
Note that
\begin{subequations}
\label{eq:superposition_inner_products}
\begin{align}
    \braket{\phi}{m} & = \frac{1}{\sqrt{N}} , \\
    \braket{\phi}{m^{\perp}} & = \sqrt{\frac{N-1}{N}} ,
\end{align}
\end{subequations}
and also that
\begin{equation}
\label{eq:sin}
    \sqrt{1 - c(s)^{2}} = \frac{2}{g(s)} \frac{\sqrt{N-1}}{N} (1-s) .
\end{equation}
Expanding the matrix element of interest gives
\begin{equation}
\begin{split}
    \mel{\varepsilon_{1}(s)}{H^{\prime}(s)}{\varepsilon_{0}(s)}
    & = \left( 1 - \braket{\phi}{m}^{2} + \braket{\phi}{m^{\perp}}^{2} \right)\mathclose{} \alpha(s) \beta(s) + \braket{\phi}{m}\braket{\phi}{m^{\perp}} \mathopen{}\left( \alpha(s)^{2} - \beta(s)^{2} \right) \\
    & = 2 \mathopen{}\left( 1 - \frac{1}{N} \right)\mathclose{} \alpha(s) \beta(s) + \frac{\sqrt{N-1}}{N} \mathopen{}\left( \alpha(s)^{2} - \beta(s)^{2} \right) \\
    & = \left( 1 - \frac{1}{N} \right)\mathclose{} \sqrt{1 - c(s)^{2}} + \frac{\sqrt{N-1}}{N} c(s) ,
\end{split}
\end{equation}
where the second equality uses \cref{eq:superposition_inner_products} and the final equality uses \cref{eq:gs_coefficients}.
Finally, substituting \cref{eq:cos,eq:sin} gives
\begin{equation}
    \mel{\varepsilon_{1}(s)}{H^{\prime}(s)}{\varepsilon_{0}(s)}
    = \frac{\sqrt{N-1}}{N} \frac{1}{g(s)}
\end{equation}
as required.

\section{Marked probability using the original adiabatic schedule}
\label{sec:orig_sched_prob}

The original schedule of \textcite{roland2002quantum} reads (see \cref{tab:schedule_comparison})
\begin{equation}
    s(\tau) = \frac{1}{2} \mathopen{}\left[ 1 + \frac{\tan[(2 \tau - 1) \arctan{\sqrt{N-1}}]}{\sqrt{N-1}} \right]\mathclose{} .
\end{equation}
Inserting this into \cref{eq:marked_prob} for the probability of measuring the marked state in an ideal adiabatic evolution yields
\begin{equation}
    q = \frac{1}{2} \mathopen{}\left[ 1 + \frac{1}{\sqrt{N}} \cos\mathopen{}\left( (2\tau - 1) \arctan{\sqrt{N-1}} \right)\mathclose{} + \sqrt{\frac{N-1}{N}} \sin\mathopen{}\left( (2\tau - 1) \arctan{\sqrt{N-1}} \right)\mathclose{} \right]\mathclose{} ,
\end{equation}
where we have noted $\left\lvert (2\tau - 1) \arctan{\sqrt{N-1}} \right\rvert \leq \pi / 2$ to simplify the expression.
Now let
\begin{equation}
    \theta = \arcsin{\frac{1}{\sqrt{N}}} ,
\end{equation}
which is the angle found in Grover's algorithm, and let
\begin{equation}
    \vartheta = \frac{\pi}{2} - \theta = \arccos{\frac{1}{\sqrt{N}}} = \arctan{\sqrt{N-1}} .
\end{equation}
From this angle,
\begin{equation}
    \cos{\vartheta} = \frac{1}{\sqrt{N}} ,\quad
    \sin{\vartheta} = \sqrt{\frac{N-1}{N}} .
\end{equation}
Thus, the probability
\begin{equation}
\begin{split}
    q & = \frac{1}{2} [1 + \cos{\vartheta} \cos((2\tau - 1) \vartheta) + \sin{\vartheta} \sin((2\tau - 1) \vartheta)] \\
    & = \frac{1}{2} [1 + \cos(2 (1 - \tau) \vartheta)] \\
    & = \cos^{2}[(1 - \tau) \vartheta] \\
    & = \cos^{2} \mathopen{}\left[ (1 - \tau) \arccos{\frac{1}{\sqrt{N}}} \right]\mathclose{} ,
\end{split}
\end{equation}
where the second equality used the angle difference identity $\cos(y - x) = \cos{x}\cos{y} + \sin{x}\sin{y}$ and the third equality used the double-angle formula $\cos(2x) = 2\cos^{2}{x} - 1$.

\section{Linear marked probability bound}
\label{sec:linear_bound}

Bounding further the second case of \cref{eq:p_of_tau_exact_const_err} and noting $p \geq 0$ gives for any choice of $1 < r < \frac{1}{\varepsilon^{2}}$ that
\begin{equation}
    p \geq
    \begin{cases}
        0 & \text{if $q \leq r \varepsilon^{2}$,} \\
        \frac{r-1}{r(1-r\varepsilon^{2})} (q - r\varepsilon^{2}) & \text{otherwise.}
    \end{cases}
\end{equation}
Choosing $r = \frac{1}{\varepsilon}$ results in unit gradient
\begin{equation}
    p \geq
    \begin{cases}
        0 & \text{if $q \leq \varepsilon$,} \\
        q - \varepsilon & \text{otherwise.}
    \end{cases}
\end{equation}
Since according to \cref{eq:tau_of_q} $\tau \leq q$ for all $N$, we get
\begin{equation}
    p \geq
    \begin{cases}
        0 & \text{if $\tau \leq \varepsilon$,} \\
        \tau - \varepsilon & \text{otherwise.}
    \end{cases}
\end{equation}

\section{Ground state amplitude from the Schr{\"o}dinger equation}
\label{sec:numerical_schrodinger}

Consider the Schr{\"o}dinger equation
\begin{equation}
\label{eq:schrodinger_groundamp}
    \frac{i}{T} \dv{\tau} \ket{\psi(\tau)} = H(s(\tau)) \ket{\psi(\tau)}
\end{equation}
with some schedule $s$ given, recalling also that $\ket{\psi(0)} = \ket{\phi}$ (the uniform superposition state) and
\begin{equation}
    H(s) = I - (1-s) \ketbra{\phi} - s \ketbra{m}
\end{equation}
for some $\ket{m}$ in the $n$-qubit computational basis.
In matrix form, let us write the state vector coefficients $\psi_{k}(\tau) = \braket{k}{\psi(\tau)}$, where we index with $k \in \{0, \dots, N-1\}$.
Again indexing starting at $0$, the Hamiltonian $H(s)$ has matrix coefficients
\begin{equation}
    H_{jk}(s) =
    \begin{cases}
        - \frac{1-s}{N} + 1 - s & \text{if $j = k = m$,} \\
        - \frac{1-s}{N} + 1 & \text{if $j = k \neq m$,} \\
        - \frac{1-s}{N} & \text{if $j \neq k$.}
    \end{cases}
\end{equation}
Defining $\bar{\psi}(\tau) = \sum_{k} \psi_{k}(\tau)$, the Schr{\"o}dinger equation \cref{eq:schrodinger_groundamp} then gives rise to a homogeneous linear system of $N$ ordinary differential equations
\begin{equation}
\label{eq:ode_system_full}
    \frac{i}{T} \dv{\psi_{j}}{\tau}{(\tau)} =
    \begin{cases}
        - \frac{1-s(\tau)}{N} \bar{\psi}(\tau) + \psi_{m}(\tau) - s(\tau) \psi_{m}(\tau) & \text{if $j=m$,} \\
        - \frac{1-s(\tau)}{N} \bar{\psi}(\tau) + \psi_{j}(\tau) & \text{otherwise.}
    \end{cases}
\end{equation}
with initial conditions $\psi_{j}(0) = \frac{1}{\sqrt{N}}$ and $\bar{\psi}(0) = \sqrt{N}$.
Summing over all of these equations gives
\begin{equation}
\label{eq:ode_system_sum}
    \frac{i}{T} \dv{\bar{\psi}}{\tau}{(\tau)} = s(\tau) \mathopen{}\left[ \bar{\psi}(\tau) - \psi_{m}(\tau) \right]\mathclose{} .
\end{equation}
Let us define
\begin{subequations}
\begin{gather}
    \chi(\tau) =
    \begin{bmatrix}
        \psi_{m}(\tau) \\
        \bar{\psi}(\tau)
    \end{bmatrix} , \\
    \tilde{H}(\tau) =
    \begin{bmatrix}
         1 - s(\tau) & - \frac{1-s(\tau)}{N} \\
        -s(\tau) & s(\tau)
    \end{bmatrix} .
\end{gather}
\end{subequations}
The linear system
\begin{equation}
\label{eq:ode_system_2d}
    \chi^{\prime}(\tau) = -iT \tilde{H}(\tau) \chi(\tau)
\end{equation}
of two ordinary differential equations with initial conditions
\begin{equation}
    \chi(0) =
    \begin{bmatrix}
        N^{- \frac{1}{2}} \\
        N^{\frac{1}{2}}
    \end{bmatrix}
\end{equation}
is simply the $j=m$ case of \cref{eq:ode_system_full} combined with \cref{eq:ode_system_sum}.
Since $\tilde{H}(\tau_{1})$ does not commute with $\tilde{H}(\tau_{2})$ for $\tau_{1} \neq \tau_{2}$, the vector
\begin{equation}
\label{eq:ode_system_nonsol}
    e^{-iT \int_{0}^{\tau} \tilde{H}(x) \dd{x}} \chi(0)
\end{equation}
is \emph{not} equal to the solution value $\chi(\tau)$ to \cref{eq:ode_system_2d}.
We could attempt to find the solution using techniques such as the Magnus expansion (of which the exponent in \cref{eq:ode_system_nonsol} is the first term) \cite{magnus1954exponential}.
However, in the present work we instead choose to make use of numerical techniques.

Given the solution $\chi$, whose entries are $\psi_{m}$ and $\bar{\psi}$, we may then evaluate the quantities
\begin{subequations}
\begin{align}
    \braket{m}{\psi(\tau)} & = \psi_{m}(\tau) , \\
    \braket{\phi}{\psi(\tau)} & = \frac{\bar{\psi}(\tau)}{\sqrt{N}} .
\end{align}
\end{subequations}
From these, we can finally calculate the amplitude of the physical state $\ket{\psi(\tau)}$ in the instantaneous ground state $\ket{\varepsilon_{0}(s(\tau))}$ of the Hamiltonian $H(s(\tau))$.
Using \cref{eq:marked_perp,eq:gs_vector_0},
\begin{equation}
\begin{split}
    \braket{\varepsilon_{0}(s(\tau))}{\psi(\tau)}
    & = \left[ \alpha(s(\tau)) - \frac{\beta(s(\tau))}{\sqrt{N-1}} \right]\mathclose{} \braket{m}{\psi(\tau)} + \sqrt{\frac{N}{N-1}} \beta(s(\tau)) \braket{\phi}{\psi(\tau)} \\
    & = \left[ \alpha(s(\tau)) - \frac{\beta(s(\tau))}{\sqrt{N-1}} \right]\mathclose{} \psi_{m}(\tau) + \frac{\beta(s(\tau))}{\sqrt{N-1}} \bar{\psi}(\tau) .
\end{split}
\end{equation}
Note that the value of this amplitude at dimensionless time $\tau$ depends on the choice of schedule function $s$ (see \cref{tab:schedule_comparison}), which appears both in the Hamiltonian used to solve for $\chi$ and arguments of the coefficients $\alpha$ and $\beta$ for its ground state.

\printbibliography

\end{document}